\newtheorem{theorem}{Theorem}[section]
\newtheorem{lemma}[theorem]{Lemma}
\newtheorem{claim}{Claim}[theorem]
\newtheorem{observation}[theorem]{Observation}
\newcommand*{\claimproofs}{Proof of the Claim.\quad}
\newenvironment{claimproof}[1][\claimproofs]{\begin{proof}[#1]}{\end{proof}}
\newcommand{\NP}{\textsf{NP}}
\newcommand{\yes}{\texttt{yes}}
\newcommand{\no}{\texttt{no}}
\newcommand{\cP}{\mathcal{P}}
\newcommand{\cF}{\mathcal{F}}
\newcommand{\cS}{\mathcal{S}}
\newcommand{\RR}{\textsf{RR}}
\begin{document}

\title{\Large A Quasi-Polynomial Time Algorithm for \\ 3-Coloring Circle Graphs}
    \author{
    Ajaykrishnan E S\thanks{University of California, Santa Barbara, USA. 
    Email: es.ajaykrishnan@gmail.com, daniello@ucsb.edu. 
    Supported by NSF grant 2505099: Collaborative Research: AF Medium: Structure and Quasi-Polynomial Time Algorithms.}
    \and
    Robert Ganian\thanks{Technische Universität Wien, Austria. 
    Email: rganian@gmail.com. 
    Supported by Austrian Science Fund (FWF) Projects Y1329 and COE12, and the WWTF Vienna Science and Technology Fund Project ICT22029.}
    \and
    Daniel Lokshtanov\footnotemark[1]
    \and
    Vaishali Surianarayanan\thanks{University of California, Santa Cruz, USA. 
    Email: vaishalisurianarayanan@gmail.com. 
    Supported by the UCSC Chancellor’s Postdoctoral Fellowship.}
    }

\date{}

\maketitle

\begin{abstract}
A graph $G$ is a {\em circle} graph if it is an intersection graph of chords of a unit circle. 
We give an algorithm that takes as input an $n$ vertex circle graph $G$, runs in time at most $n^{O(\log n)}$ and finds a proper $3$-coloring of $G$, if one exists. 
As a consequence we obtain an algorithm with the same running time to determine whether a given ordered graph $(G, \prec)$ has a $3$-page book embedding. 
This gives a partial resolution to the well known open problem of 
Dujmovi\'{c} and Wood [Discret. Math. Theor. Comput. Sci. 2004],
Eppstein [2014],
and Bachmann, Rutter and Stumpf [J. Graph Algorithms Appl. 2024]
of whether 3-{\sc Coloring} on circle graphs admits a polynomial time algorithm.  
\end{abstract}


\section{Introduction}

A \emph{circle graph} is an intersection graph of chords in a unit circle, i.e., a graph whose vertices can be associated with chords in a circle and where two vertices are adjacent if and only if their corresponding chords intersect. Circle graphs are well-studied from the graph-theoretic as well as algorithmic~\cite{BachmannRS24,BrijderT22,BrucknerRS24,ChaplickFK19,Damian-IordacheP00,GaborSH89} perspectives, and are known to be recognizable in almost-linear time~\cite{GioanPTC14}. 
On a separate note, given a (general) graph $G$ and a linear order $\prec$ of $V(G)$, we say that $(G,\prec)$ has a \emph{$q$-page book embedding} if the edges of $G$ can be partitioned into $q$ sets $E_1,\dots,E_q$ (called \emph{pages}) such that no pair of edges $ab$, $cd$ mapped to the same page form the configuration $a\prec c \prec b \prec d$. Intuitively, one often visualizes pages as half-planes whereas a book embedding ensures that the edges can be drawn on their assigned pages in a crossing-free manner. Book embeddings have been~\cite{chung1987embedding,Yannakakis86,Yannakakis89} and remain~\cite{BachmannRS24,BhoreGMN20,DepianFGN24,GanianMOPR24} the focus of extensive research, both with and without a fixed choice of the ordering $\prec$.


\begin{figure}[htbp]
    \centering
    \hspace{4pt}
    \includegraphics[width=0.45\textwidth]{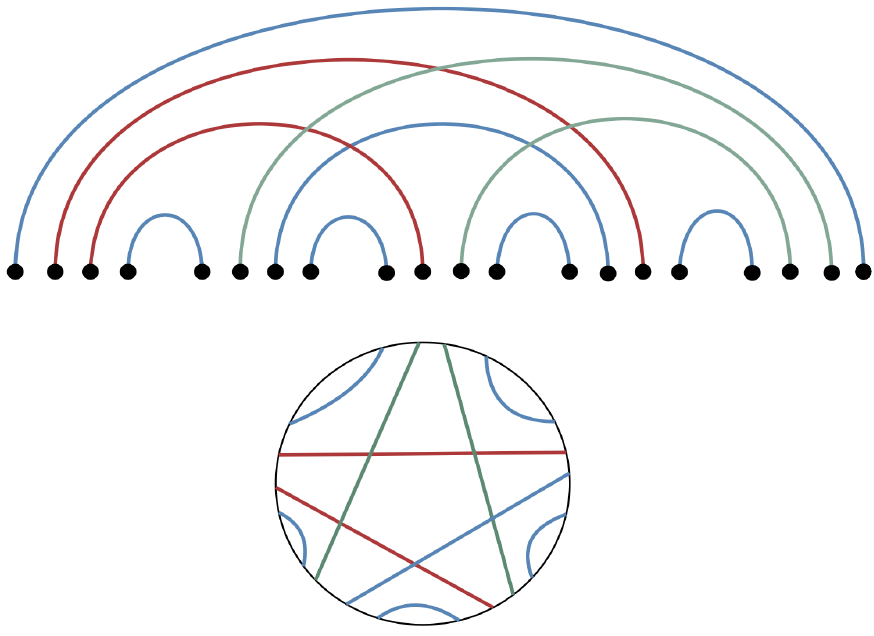}
    \caption{Correspondence between 3-page book embedding and circle graph 3-coloring (Source~\cite{BachmannRS24})}
    \label{fig:book_thickness}
\end{figure}

There is a direct and folklore correspondence between computing a $q$-page book embedding (for a given pair $(G,\prec)$, where $G$ is a general graph) and computing a proper $q$-coloring of a circle graph (for a given circle graph $H$)~\cite[Page 2]{BachmannRS24}, refer Figure~\ref{fig:book_thickness} (sourced from \cite{BachmannRS24}). Naturally, we can determine whether a circle graph admits a proper $2$-coloring in polynomial time, but the same question for $4$ colors is known to be \NP-hard~\cite{Unger88}.
In this article, we focus on the in-between case of \textsc{3-Coloring} on circle graphs, which has received considerable attention and has a curious history. 

In 1992, Unger~\cite{Unger92} claimed a polynomial-time algorithm for the problem, but the conference version of that result was missing many crucial details and referenced Unger's thesis for the full proof. Unfortunately, the thesis itself is written in German, is not available online, and the proof in question has not been reproduced since. This unsatisfactory state of affairs was pointed out by several senior researchers in the graph drawing community, including Eppstein~\cite{Eppsteinblog} who claimed the problem should be considered open; in their earlier survey, Dujmović and Wood~\cite{DujmovicPW04} also listed the problem as open. In their recent paper~\cite{BachmannRS24} on this specific topic, Bachmann, Rutter and Stumpf identified a concrete and seemingly critical flaw in Unger's original algorithm and also provided a more detailed recapitulation of the problem's history. As of now, the complexity status of \textsc{3-Coloring} on circle graphs remains a prominent open question.

In this article, we present a quasi-polynomial-time algorithm with runtime $n^{O(\log n)}$ for \textsc{3-Coloring} on circle graphs, which can be seen as a strong indication that the problem is unlikely to be \NP-hard. In fact, our algorithm also solves the more general \textsc{List 3-Coloring} problem. While the result is non-trivial, the proof does not involve complicated machinery and is fully reproducible. Crucially, it represents tangible progress on a problem which has remained ``in limbo'' for more than three decades.

\subparagraph{Related Work.}
The study of book embeddings dates back to the seventies of the previous century~\cite{Kainen74},
among others due to their applications in bioinformatics, VLSI, and parallel computing (see, e.g.,~\cite{chung1987embedding,DujmovicPW04,Haslinger1999}). 
A classical result of Yannakakis establishes that every planar graph $G$ admits a $4$-page book embedding (for some ordering $\prec$), and this result is tight~\cite{Yannakakis86,Yannakakis89}.
Similar results have since then been obtained for a variety of other graph classes~\cite{BekosLGGMR20,BekosLGGMR24,WoodD11}.
However, on the computational side deciding whether an input graph $G$ has an ordering $\prec$ admitting a book embedding with two pages is already \NP-complete, as this is equivalent to the \NP-complete problem of testing whether $G$ is a subgraph of a planar Hamiltonian graph~\cite{BernhartK79,chung1987embedding}. This problem was recently shown to admit a subexponential algorithm~\cite{GanianMOPR24}.

When $\prec$ is provided as part of the input, the aforementioned polynomial-time equivalence between computing book embeddings and coloring circle graphs applies. Unger's claimed polynomial-time algorithm~\cite{Unger92} for solving the open case of 3-coloring circle graphs operated by first constructing a 3-\textsc{Sat} formula that is satisfiable if and only if the circle graph admits a 3-coloring. Bachmann, Rutter and Stumpf recently showed that the proposed construction can, however, produce a satisfiable formula even though the input graph is not 3-colorable~\cite{BachmannRS24}. 
The second part of the claimed algorithm~\cite{Unger92} was then supposed to verify the satisfiability of the produced 3-\textsc{Sat} formulas via a backtracking strategy that relied on the specific structure of these formulas. In the same recent publication~\cite{BachmannRS24}, Bachmann, Rutter and Stumpf showed that the proposed backtracking strategy is flawed as well---both in terms of correctness and running time behavior.

\subparagraph{Proof Overview.}
We will assume that a chord diagram (a representation of $G$ as the intersection graph of chords on a unit circle, where all chords are straight lines with distinct endpoints) for $G$ is provided as part of the input; if not, we can compute it using the almost-linear time algorithm of Gioan, Paul, Tedder and Corneil~\cite{GioanPTC14}. 
We will solve the more general {\sc List $3$-Coloring} where, in addition to $G$ we are given as input a function $S : V(G) \rightarrow 2^{\{\texttt{red},\texttt{blue},\texttt{green}\}}$ and the task is to find a proper coloring $\texttt{col} : V(G) \rightarrow \{\texttt{red},\texttt{blue},\texttt{green}\}$ such that $\texttt{col}(v) \in S(v)$ for every vertex $v \in V(G)$.
Working with lists allows us to conveniently delete a vertex $v$ from the graph if we already have decided which color $v$ should get.  
We may then remove $v$'s color from the lists of its neighbors, and delete $v$ from $G$.

The key observation that we use throughout the algorithm is as follows.  
A $4$-{\em partition}, called {\em circle partition} in our main text - see Figure~\ref{fig:circle_definitions} - is a partition of the unit circle into $4$ arcs $A$, $B$, $C$, $D$ encountered in this order when moving clockwise around the circle. 
We can classify the chords of the circle, that is the vertices of $G$, according to which part of the partition the endpoints of the chord lie in. So an $A$-$C$ chord has one endpoint in $A$ and the other in $C$.
Every $A$-$C$ chord crosses every $B$-$D$ chord. Thus, if $G$ is $3$-colorable, either all of the $A$-$C$ chords or all of the $B$-$D$ chords must get the same color. 
{\em Branching} on the $4$-partition $(A, B, C, D)$ refers to recursively solving the $6$ instances resulting from assigning to all of the $A$-$C$ chords or to all of the $B$-$D$ chords one of the $3$ possible colors (and deleting these chords after appropriately updating the lists of neighbor vertices). 
If the guess results in a coloring that violates the lists of any of the colored vertices, or the list of an uncolored vertex becomes empty, the resulting instance is a \no\ instance and is skipped. 
It is clear that $G$ has a $3$-coloring respecting
the lists if and only if one of the resulting $6$ instances does as well. 

The entire algorithm relies on branching on (carefully chosen) $4$-partitions. The $4$-partitions are chosen in such a way that after moving at least $1 + \log_{4/3} n$ steps down any path in the recursion tree, the graph becomes disconnected and the largest connected component has at most $\frac{3n}{4}$ vertices. 
Since distinct connected components can be handled independently, this leads to the running time $T(n)$ satisfying a recurrence of the form $T(n) \leq n^{O(1)} \cdot T(\frac{3n}{4})$,
which solves to $T(n) \leq n^{O(\log n)}$.
Somewhat similar branching strategies have recently been applied for {\sc Independent Set} and also 3-{\sc Coloring} on various classes of graphs~\cite{GartlandL20,GartlandLPPR21,PilipczukPR21}.

We now sketch how to carefully choose $4$-partitions in order to achieve the goal above. 
We first pick a partition $(L, T, R, B)$ (a mnemonic for left, top, right, bottom) where each of the four parts contain precisely one quarter of chord endpoints, see Figure~\ref{fig:branching_step_1}. We branch on the $4$-partition $(L, T, R, B)$. In each of the resulting recursive calls, either there are no $L$-$R$ chords or there are no $T$-$B$ chords. Without loss of generality we may assume that there are no $L$-$R$ chords. 
Our next goal is to separate $L$ from $R$. The idea is to branch on $4$-partitions $(A, B, C, D)$ in such a way that the resulting vertex deletions allow us to ``grow'' $L$ and $R$, ``shrink'' $T$ and $B$ until eventually no chord has an endpoint in $T \cup B$, while maintaining the invariant that there are no $L$-$R$ chords. 
More formally, in addition to $G$, $S$ and its chord diagram, this subroutine will take as input a $4$-partition $(L', T', R', B')$ such that $L \subseteq L'$, $R \subseteq R'$, $T' \subseteq T$, $B' \subseteq B$ and there are no $L'$-$R'$ chords. 
Initially we call the subroutine with the $4$-partition $(L, T, R, B)$.

Given as input the $4$-partition $(L', T', R', B')$, 
%
%
the subroutine selects the arc out of $\{T', B'\}$ that contains the most endpoints. Without loss of generality it is $T'$.
We split $T'$ into two sub-arcs $T_L'$ and $T_R'$ such that $T_L'$ and $T_R'$ contain the same number of endpoints (plus or minus one), $T_L'$ is incident to arc $L$ and $T_R'$ is incident to arc $R$.
We branch on the $4$-partition $(L', T_L', T_R', R' \cup B')$, see Figure~\ref{fig:branching_step_2}.
After this branching step, either there are no $L'-T_R'$-chords, or there are no $R'-T_L'$-chords. 
If there are no $L'-T_R'$-chords then the $4$-partition
$L'$, $T_L$, $R' \cup T_R'$, $B'$ satisfies the invariant of the subroutine (that there are no $L'-(R' \cup T_R')$-chords), and the number of endpoints in $T_L \cup B'$ is at most $\frac{3}{4}$ of the number of endpoints in $T' \cup B'$.
Similarly, if there are no $R'-T_L'$-chords then the $4$-partition
$L' \cup T_L'$, $T_R$, $R$, $B'$ satisfies the invariant of the subroutine, and the number of endpoints in $T_L \cup B'$ is at most $\frac{3}{4}$ of the number of endpoints in $T' \cup B'$.
Thus, with every step down in the recursion tree of the subroutine the number of endpoints in $T' \cup B'$ drops by a constant factor.

Therefore, at recursion depth $\log_{4/3} n$, no chords have an endpoint in $T' \cup B'$.
At this point, since there are no $L'-R'$ chords, it follows that every connected component of (what remains of) $G$ either does not contain any chords with an endpoint in $L$, or does not contain any chords with an endpoint in $R$. Since $L$ contains $\frac{n}{2}$ endpoints, the set of vertices with at least one endpoint in $L$ has size at least $\frac{n}{4}$, and the same lower bound holds for $R$.
Thus every component of $G$ has size at most $\frac{3n}{4}$, as was our goal. Hence the running time of the algorithm is upper bounded by the recurrence $T(n) \leq 6^{\log_{4/3} n + O(1)} \cdot T(\frac{3n}{4}) \leq n^{O(1)} \cdot T(\frac{3n}{4})$, which in turn is upper bounded by $n^{O(\log n)}$, giving the claimed upper bound for the running time of the algorithm.

\section{Preliminaries}
We assume basic familiarity with graph theory, and in particular the notions of \emph{vertices} and \emph{edges} of a graph $G$ (denoted $V(G)$ and $E(G)$, respectively).
An \emph{independent set} of a graph $G$ is a subset of vertices of $G$ such that no two vertices that belong to it are adjacent to each other.

A \emph{chord diagram} $(H,\alpha)$ of a graph $G$ consists of a set $H$ of straight-line chords in the unit circle along with a bijection $\alpha$ between $V(G)$ and $H$, such that $ab\in E(G)$ if and only if the chords $\alpha(a)$ and $\alpha(b)$ intersect. Throughout the paper and without loss of generality, we assume that no pair of chords in a chord diagram share the same endpoint~\cite{EsperetO09,EsperetS20}.
Graphs which admit a chord diagram are called \emph{circle} graphs. It is known that the almost linear-time algorithm for recognizing circle graphs can also output a chord diagram as a witness~\cite{GioanPTC14}. 

A \emph{$3$-coloring} of a graph $G$ is a mapping from its vertices into \{\texttt{red}, \texttt{blue}, \texttt{green}\}, such that the set of vertices assigned each color forms an independent set.
We put our problem of interest on formal footing below.

\begin{center}
	\noindent\fbox{
		\begin{minipage}{0.75\textwidth}
			\textsc{Circle Graph List $3$-Coloring} \\
			{\bf{Input:}} An $n$-vertex circle graph $G$ along with its chord diagram $(H,\alpha)$ and a color list $S: V(G) \rightarrow 2^{\{\texttt{red}, \texttt{blue}, \texttt{green}\}}$.  \\
			{\bf{Question:}} Does $G$ admit a $3$-coloring $\texttt{col}$ such that for each $v\in V(G)$, $\texttt{col}(v)\in S(v)$?
		\end{minipage}}
\end{center}

We say an instance $I$ is a \yes\ instance if it admits a $3$-coloring $\texttt{col}$ such that for each $v\in V(G)$, $\texttt{col}(v)\in S(v)$ and call it a \no\ instance otherwise. We call such a 3-coloring a valid 3-coloring. We use $H(I)$ to denote the set of chords $H$ in $I$. We use $n_I$ to denote $|V(G)|=|H(I)|$, the number of vertices and chords in $I$. When $I$ is clear from context, we will drop the subscript. We say that an instance $I=(G,(H,\alpha),S)$ is a \emph{subinstance} of $I'=(G',(H',\alpha'),S')$ if the former can be obtained from the latter by deleting vertices and removing colors from the lists---formally, $G$ is an induced subgraph of $G'$, $H\subseteq H'$, $\alpha\subseteq \alpha'$, and for each $v\in V(G')$ it holds that $S(v)\subseteq S'(v)$.  For subinstances, note that the same chord diagram is maintained apart from removing the chords of deleted vertices.

\paragraph{Circle, Arcs, Circle Partition:} We call a connected region on the boundary of a unit circle an \emph{arc}, and a straight line segment between two points on the boundary of a unit circle a \emph{chord}. For two disjoint arcs $A$ and $B$ of a unit circle, we call a chord an {\em $A$-$B$ chord} if it has one endpoint in $A$ and one in $B$. Likewise, we call a chord an $A$-$A$ chord if it has both endpoints in $A$.

A {\em circle partition $\cP=(L,T,R,B)$} is a partition of the unit circle into four, possibly empty, arcs $L,T,R,B$ where $L\cup T$, $T\cup R$, $R\cup B$, and $B\cup L$ are also arcs. 
For convenience, to avoid writing $\cP=(L,T,R,B)$, given a circle partition $\cP$ we use $L(\cP), T(\cP),$ $R(\cP), B(\cP)$ to refer to the arcs $L,T,R,B$. See Figure~\ref{fig:circle_definitions} for an example of these terminologies. 

\begin{figure}[htbp]
  \centering
  
  \begin{subfigure}[b]{0.31\textwidth}
    \includegraphics[width=\textwidth]{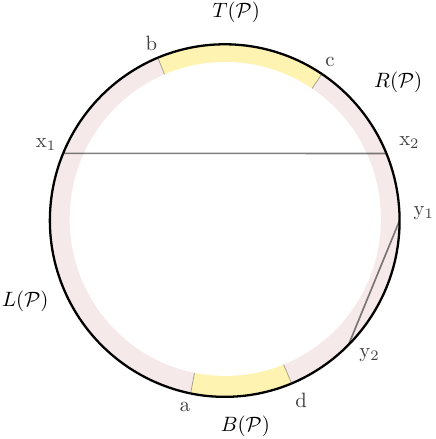}
    \caption{}
    \label{subfig:prelima}
  \end{subfigure}
  \hspace{20pt}
  \begin{subfigure}[b]{0.31\textwidth}
    \includegraphics[width=\textwidth]{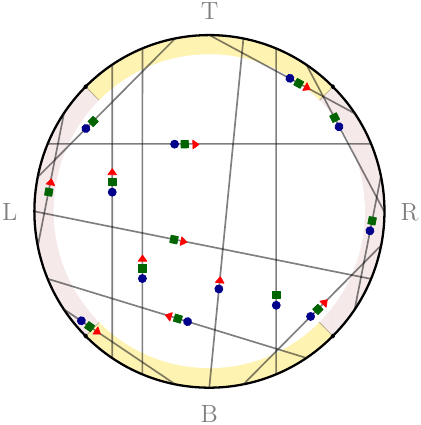}
    \caption{}
    \label{subfig:prelimb}
  \end{subfigure}
  
  \caption{Subfigure~\ref{subfig:prelima} shows a unit circle with arcs $ab$, $bc$, $cd$, and $da$ forming a circle partition $\mathcal{P}$, where $L(\mathcal{P}) = ab$, $T(\mathcal{P}) = bc$, $R(\mathcal{P}) = cd$, and $B(\mathcal{P}) = da$. Also, $x_1x_2$ and $y_1y_2$ are $L(\cP)$–$R(\cP)$ and $R(\cP)$–$R(\cP)$ chords, respectively. Subfigure~\ref{subfig:prelimb} depicts a chord representation of a \textsc{Circle Graph List 3-Coloring} instance $I$ with a circle partition $\cP = (L,T,R,B)$, where lists are indicated using {\sf red} triangles, {\sf green} squares, and {\sf blue} circles.}
  \label{fig:circle_definitions}
\end{figure}

\section{The Algorithm}

\begin{theorem}\label{thm:main_thm}
    \textsc{Circle Graph List 3-Coloring} 
    admits an algorithm running in time $n^{O(\log n)}$. 
\end{theorem}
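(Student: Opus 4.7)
The plan is to design a recursive branching algorithm driven by the following key observation: if $(A,B,C,D)$ is any circle partition, then every $A$-$C$ chord crosses every $B$-$D$ chord, so in any valid $3$-coloring at least one of these two sets of chords is entirely monochromatic. Guessing which of the two sets is monochromatic and which of the three colors it receives yields a six-way branch, after which the chosen chords can be colored and deleted (with that color struck from the lists of their neighbors). Correctness is immediate; the real task is to choose the circle partitions cleverly enough to decompose the instance quickly.

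I would first branch on a balanced circle partition $(L,T,R,B)$ in which each arc contains exactly one quarter of the $2n$ chord endpoints. In each of the six resulting subinstances either all $L$-$R$ chords or all $T$-$B$ chords have been deleted, and by symmetry I assume the former. The remainder of the work is performed by a subroutine whose invariant is that the current instance carries a circle partition $(L',T',R',B')$ with $L\subseteq L'$, $R\subseteq R'$, $T'\subseteq T$, $B'\subseteq B$, and no $L'$-$R'$ chord. The subroutine picks whichever of $T'$ and $B'$ currently contains more endpoints---say $T'$---splits it into two sub-arcs $T_L',T_R'$ of equal size (up to one) adjacent to $L'$ and $R'$ respectively, and branches on the circle partition $(L',T_L',T_R',R'\cup B')$. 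Each of the six resulting subinstances has either all $L'$-$T_R'$ chords deleted or all $R'$-$T_L'$ chords deleted; in the first case I update the working partition to $(L',T_L',R'\cup T_R',B')$, and in the second to $(L'\cup T_L',T_R',R',B')$. The invariant of having no left-right chord survives because it held on entry and the inner branch has just eliminated the offending chords, while the potential $|T'\cup B'|$ shrinks by at least $|T'|/2\geq |T'\cup B'|/4$. Hence after $O(\log n)$ iterations of the subroutine both $T'$ and $B'$ are endpoint-free.

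At that moment every surviving chord has both endpoints in $L'$ or both in $R'$, so the graph splits into two vertex-disjoint subgraphs that can be colored independently. Because the original $L$ contained $n/2$ endpoints there were at least $n/4$ chords touching $L$, and every such chord either ends up in the left subgraph or has been deleted along the way; therefore the right subgraph has at most $3n/4$ vertices, and by the symmetric argument so does the left. The recurrence $T(n)\leq 6^{O(\log n)}\cdot T(3n/4)\leq n^{O(1)}\cdot T(3n/4)$ then unwinds to $n^{O(\log n)}$.

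The main obstacle I anticipate is bookkeeping around the subroutine's invariant. In particular I have to check that after each update the new tuple $(L',T',R',B')$ is still a circle partition (i.e.\ consecutive unions still form arcs), that the containments $L\subseteq L'$, $R\subseteq R'$, $T'\subseteq T$, $B'\subseteq B$ are preserved, and that the absence of a chord between the new left and right arcs really does follow from the invariant on entry together with the chords deleted in the inner branch. Once these invariants are nailed down, the monotone decrease of $|T'\cup B'|$ by a factor of $3/4$ per recursion level and the final component-size bound both drop out routinely.
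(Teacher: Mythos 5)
Your proposal follows exactly the same approach as the paper's proof: a six-way branch on a balanced circle partition $(L,T,R,B)$, a subroutine maintaining the invariant of no $L'$-$R'$ chords while halving the heavier of the two remaining arcs and branching six ways again, a $3/4$-factor decrease of the $T'\cup B'$ potential per level, and the final $3n/4$-component split yielding the recurrence $T(n)\leq n^{O(1)}\cdot T(3n/4)$. The paper just packages the subroutine into explicit lemmas about ``semi-separated'' and ``fully-separated'' instances, but the decomposition strategy, invariants, and recurrence are the same as yours.
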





Let $I$ be an instance of \textsc{Circle Graph List 3-Coloring}. On a high level, we will design an algorithm that in order to solve the instance $I$, recursively solves polynomially many subinstances of $I$ having size bounded by a constant fraction of the size of $I$. We start by defining a notion that will help us break an instance into subinstances based on its chord diagram. 


\paragraph{Fully-Separated Instances.}We call a tuple $(I,\cP)$, where $I$ is an instance of \textsc{Circle Graph List 3-Coloring} and $\cP$ is a circle partition, a {\em fully-separated instance} if it satisfies (i) $T(\cP)=B(\cP)=\emptyset$ and (ii) $H(I)$  contains no $L(\cP)$-$R(\cP)$ chords. See Figure~\ref{fig:fully_separated} for an example.

As the name suggests, these fully separated instances will help us break instances into disjoint subinstances. Given a fully-separated instance $(I,\cP)$, we denote by $I_{L(\cP)}$ the subinstance of $I$ obtained by removing all but $L(\cP)$-$L(\cP)$ chords and their corresponding vertices from $I$. Similarly, we denote by $I_{R(\cP)}$ the subinstance of $I$ obtained by removing all but $R(\cP)$-$R(\cP)$ chords and their corresponding vertices from $I$. See Figure~\ref{fig:fully_separated} for an example of these subinstances.

\begin{figure}[htbp]
    \centering
    
    \begin{subfigure}[b]{0.31\textwidth}
    \includegraphics[width=\textwidth]{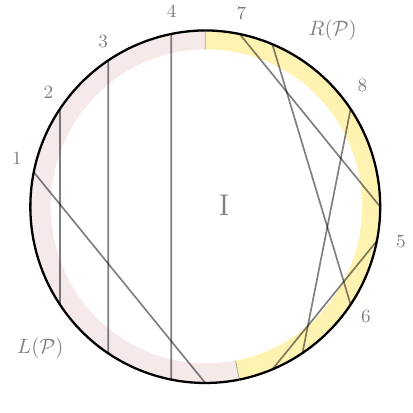}
    \caption{}
    \label{subfig:2a}
    \end{subfigure}
    \hspace{5pt}
    \begin{subfigure}[b]{0.307\textwidth}
    \includegraphics[width=\textwidth]{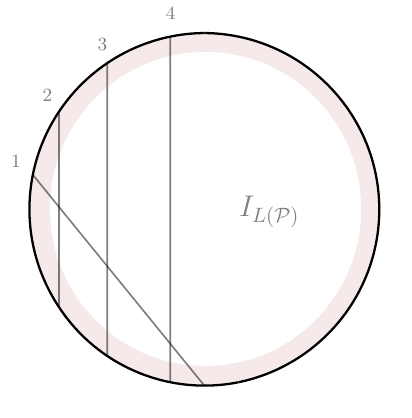}
    \caption{}
    \label{subfig:2b}
    \end{subfigure}
    \begin{subfigure}[b]{0.313\textwidth}
    \includegraphics[width=\textwidth]{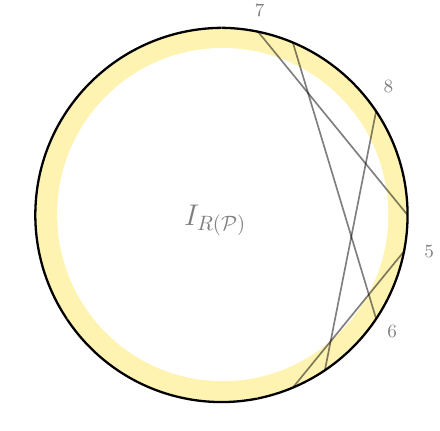}
    \caption{}
    \label{subfig:2c}
    \end{subfigure}
    
    \caption{Subfigure~\ref{subfig:2a} is a chord diagram corresponding to a fully-separated instance $(I,\cP)$ (arcs $T(\cP)$ and $B(\cP)$ are not shown since they are empty). Subfigures~\ref{subfig:2b} and~\ref{subfig:2c} show chord diagrams corresponding to the subinstances $I_{L(\cP)}$ and $I_{R(\cP)}$ of $I$ respectively.}
    \label{fig:fully_separated}
\end{figure}

Since $I$ has no $L(\cP)$-$R(\cP)$ chords and $\cP$ is a circle partition with $T(\cP)=B(\cP)=\emptyset$, we have the following observation relating $I$ with $I_{L(\cP)}$ and $I_{R(\cP)}$.
\begin{observation}\label{obs:fully-sep-split}
    For a fully-separated instance $(I,\cP)$, $I$ is a \yes\ instance if and only if both $I_{L(\cP)}$ and $I_{R(\cP)}$ are \yes\ instances.
\end{observation}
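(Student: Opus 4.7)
The plan is to argue that a fully-separated instance $(I,\cP)$ literally decomposes into the two subinstances $I_{L(\cP)}$ and $I_{R(\cP)}$ as a disconnected union of list-coloring problems, after which the biconditional is immediate.

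First I would show that the vertex sets of $I_{L(\cP)}$ and $I_{R(\cP)}$ partition $V(G)$. Because $(I,\cP)$ is fully-separated we have $T(\cP)=B(\cP)=\emptyset$, so every chord endpoint lies in $L(\cP)\cup R(\cP)$; combined with the absence of $L(\cP)$-$R(\cP)$ chords, every chord is either an $L(\cP)$-$L(\cP)$ chord or an $R(\cP)$-$R(\cP)$ chord, and these two possibilities are disjoint. By definition of the subinstances, $V(I_{L(\cP)})$ and $V(I_{R(\cP)})$ then partition $V(G)$, and the list assignments are the restrictions of $S$.

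Next I would check that there are no edges of $G$ between $V(I_{L(\cP)})$ and $V(I_{R(\cP)})$. This is the only mildly geometric step: since $L(\cP)$ and $R(\cP)$ are disjoint arcs of the circle whose union is the whole circle (as $T(\cP)=B(\cP)=\emptyset$), any $L(\cP)$-$L(\cP)$ chord $c$ separates the open disk into two regions, one of whose boundary arcs is entirely contained in $L(\cP)$; any $R(\cP)$-$R(\cP)$ chord $c'$ has both endpoints on the other side and is therefore disjoint from $c$. Hence in the chord diagram no $L$-$L$ chord meets any $R$-$R$ chord, meaning the corresponding vertices are non-adjacent in $G$. Consequently, $G$ is the disjoint union of the induced subgraphs underlying $I_{L(\cP)}$ and $I_{R(\cP)}$.

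Finally I would conclude both directions of the equivalence. For the forward direction, if $\col$ is a valid list $3$-coloring of $I$, its restrictions to $V(I_{L(\cP)})$ and $V(I_{R(\cP)})$ are valid list $3$-colorings of the respective subinstances, as both properness and list membership are inherited by restriction. For the backward direction, given valid list $3$-colorings $\col_L$ of $I_{L(\cP)}$ and $\col_R$ of $I_{R(\cP)}$, their union is a well-defined map $V(G)\to\{\texttt{red},\texttt{blue},\texttt{green}\}$ respecting $S$; it is proper because every edge of $G$ lies entirely within one of the two parts, where $\col_L$ or $\col_R$ already ensures properness. I do not anticipate a real obstacle here: the only substantive point is the geometric non-crossing fact between $L$-$L$ and $R$-$R$ chords, which follows immediately from $L(\cP)$ and $R(\cP)$ being the two arcs of a partition of the circle into two arcs.
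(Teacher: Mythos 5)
Your argument is correct and fills in exactly the details the paper leaves implicit; the paper states this observation without proof, merely noting that it follows because $T(\cP)=B(\cP)=\emptyset$ and there are no $L(\cP)$-$R(\cP)$ chords. Your geometric reasoning for why an $L$-$L$ chord cannot cross an $R$-$R$ chord, together with the ensuing vertex partition and the edge-disjoint decomposition of $G$, is precisely the intended justification.
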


Let $I$ be an instance of \textsc{Circle Graph List 3-Coloring} that we want to solve. 
If we can find a family of fully-separated instances such that $I$ is a \yes\ instance if and only if at least one subinstance in the family is a \yes\ instance, Observation~\ref{obs:fully-sep-split}, will allow us to recursively solve the problem on the disjoint subinstances $I'_{L(\cP)}$ and $I'_{R(\cP)}$ corresponding to instances $I'$ belonging to the family. We will show that such a ``\emph{good}'' family can be computed in $n^{\mathcal{O}(1)}$ time. 
%
In the family we compute, each $(I',\cP)$ will satisfy (i) $I'$ is a subinstance of $I$ and (ii) $n_{I'_{L(\cP)}}, n_{I'_{R(\cP)}} \leq \frac{3}{4} n_I$. We then use this family to design an algorithm that solves $I$ by recursively solving $n^{O(1)}$ many subinstances of $I$ having size at most $\frac{3}{4}n_I$. This yields a running time bound of $n^{O(\log n)}$ as the number of recursive calls is $n^{O(1)}$ and the depth of the recursion is $O(\log n)$.  
%
We now state a lemma that shows how to find such a family, phrased in a slightly different way. 
\begin{lemma}
\label{lem:family-fully-seperated-instances}
    There exists an algorithm that takes as input an instance $I$ of \textsc{Circle Graph List 3-Coloring}, runs in time $n^{O(1)}$, and returns a family containing at most $ n^{O(1)}$ fully-separated instances such that:
    \begin{enumerate}
        \item  $I$ is a \yes\ instance if and only if the family contains a fully-separated instance $(I',\cP')$ where $I'$ is a \yes\ instance.
        \item For each fully-separated instance $(I',\cP')$ in the family:
        \begin{enumerate}
            \item $I'$ is a subinstance of $I$
            \item The arcs $L(\cP')$ and $R(\cP')$ contain at least $\lfloor n/2 \rfloor$ endpoints of chords in $H(I)$ 
        \end{enumerate}     
    \end{enumerate}
\end{lemma}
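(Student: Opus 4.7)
The plan is to construct the family by recursive branching, producing a tree whose leaves are the desired fully-separated instances. I would begin by picking a \emph{balanced} circle partition $(L,T,R,B)$ of the input instance $I$ in which each arc contains as close as possible to one quarter of the $2n$ chord endpoints; such a partition is computable in linear time by sweeping along the circle. Then I would apply a top-level branching step with $6$ branches, exploiting the combinatorial fact emphasized in the proof overview: every $L$-$R$ chord crosses every $T$-$B$ chord, so in any proper $3$-coloring the color sets used on these two chord classes must be disjoint, and since only three colors are available, at least one class must be monochromatic. The $6=2\cdot 3$ branches enumerate which class is guessed to be monochromatic and in which color; in each branch I would assign the guessed color to every chord in the chosen class, update the color lists of neighbors, and discard the branch if any list becomes invalid (including the case where two chords to be assigned the same color are adjacent). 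By symmetry I focus on the branches in which the $L$-$R$ chords are eliminated, so the resulting subinstance $I'$ has no $L$-$R$ chords and inherits the partition $(L,T,R,B)$.

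Next I would invoke a recursive subroutine that takes a subinstance along with a circle partition $(L',T',R',B')$ satisfying the invariants $L\subseteq L'$, $R\subseteq R'$, $T'\subseteq T$, $B'\subseteq B$, and no $L'$-$R'$ chords. At each call the subroutine selects the arc between $T'$ and $B'$ with more chord endpoints (say $T'$), splits it into two sub-arcs $T_L'$ and $T_R'$ of nearly equal endpoint count with $T_L'$ adjacent to $L'$ and $T_R'$ adjacent to $R'$, and applies the same $6$-way branching to the circle partition $(L',T_L',T_R',R'\cup B')$. In each of the $6$ resulting branches, either all $L'$-$T_R'$ chords or all $T_L'$-$(R'\cup B')$ chords have been assigned a single color and deleted. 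In the first case the new partition becomes $(L',T_L',R'\cup T_R',B')$, and in the second $(L'\cup T_L',T_R',R',B')$; both satisfy the invariants, since any chord that would now lie between the updated $L'$ and $R'$ is exactly one of those just removed. Moreover the endpoint count in the new top-plus-bottom arcs is at most $\tfrac{3}{4}$ of the previous $|T'\cup B'|$, because we halved the larger of the two arcs.

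After $O(\log n)$ levels of subroutine recursion along any root-to-leaf path, the top and bottom arcs contain no chord endpoints, so the current instance together with its circle partition (with top and bottom set to $\emptyset$) is fully separated and is placed into the output family. The overall tree has depth $O(\log n)$ and branching factor $6$, giving $6^{O(\log n)}=n^{O(1)}$ leaves and $n^{O(1)}$ total nodes with polynomial work per node, so the algorithm runs in time $n^{O(1)}$. Property~$1$ follows by induction on the depth: the forward direction uses the key observation at each branching step (any valid coloring of the current instance agrees with at least one of the $6$ guesses), and the backward direction combines the colors assigned along a root-to-leaf path with the coloring of any yes-leaf to recover a valid coloring of $I$. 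Property~$2(a)$ is immediate from the construction, and Property~$2(b)$ holds because $L(\cP')$ and $R(\cP')$ only grow through the recursion and initially each contains close to $\tfrac{n}{2}$ endpoints of chords in $H(I)$. The main obstacle I expect is making the $3/4$ shrinkage argument rigorous in edge cases where $|T'|$ and $|B'|$ are highly unbalanced, and handling the precise tie-breaking in the arc splits (both in the initial balanced partition and in the half-by-endpoint split of $T'$) so that the claimed inequalities hold exactly as stated.
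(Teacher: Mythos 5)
Your plan matches the paper's proof in essence: you construct the family by a $6$-way branching on a balanced circle partition followed by a recursive subroutine that grows $L$ and $R$ while shrinking $T$ and $B$, and your invariants, the $\tfrac{3}{4}$ shrinkage argument, and the $6^{O(\log n)} = n^{O(1)}$ size/time bound are all the same ones the paper uses (the paper just packages the branching into a separate helper lemma and splits the construction into a ``semi-separated'' and then ``fully-separated'' phase).

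The one genuine gap is a termination issue you flag but do not resolve. When the larger of $T'$, $B'$ contains only a single chord endpoint, splitting it into $T_L'$ and $T_R'$ of ``nearly equal'' size forces one of those sub-arcs to be empty; the branching then removes no endpoints from $T' \cup B'$, and your $\tfrac{3}{4}$ progress bound degenerates to $1$, so the recursion need not bottom out. The paper handles this as an explicit separate case: with exactly one endpoint in the arc $X$ (and the standing assumption that chords have distinct endpoints), at most one chord touches $X$, so $X$ cannot simultaneously carry an $L'$-$X$ chord and an $X$-$R'$ chord, and one can deterministically merge $X$ into whichever of $L'$, $R'$ preserves the no-$L'$-$R'$-chords invariant, without any branching. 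Without this (or an equivalent) base case your induction on the endpoint count in $T' \cup B'$ does not close, so you would need to add it to make the proof complete. Everything else in your argument is sound and mirrors the paper's.
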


We defer the proof of Lemma~\ref{lem:family-fully-seperated-instances} to Subsection~\ref{sub:lemmaproof}. Before proceeding there, we use the lemma to establish our main result, Theorem~\ref{thm:main_thm}. 
\begin{proof}[Proof of Theorem~\ref{thm:main_thm}] We first present our algorithm that takes an input an instance $I$ of \textsc{Circle Graph List 3-Coloring} and returns \yes\ if $I$ is an \yes\ instance and \no\ otherwise.\\ \\
   \textsf{ALG-3-Color-Circle-Graph ($I$):}
   \begin{enumerate}
       \item  Let $n:=n_I$. If $n\leq 8$, brute force over each possible 3-coloring of $G$ respecting the list $S$ and return \yes\ if there exists a proper 3-coloring of $G$, else return \no. If $n>8$, proceed to the next step.
       \item Obtain a family $\cF$ of fully-separated instances using Lemma~\ref{lem:family-fully-seperated-instances} on $I$. 
       \item  For each fully-separated instance $
    (I',\cP')\in \cF$
       \begin{enumerate}
           \item Recursively call the algorithm, \textsf{ALG-3-Color-Circle-Graph}, on $I'_{L(\cP')}$ and $I'_{R(\cP')}$.
           \item  If both the calls return \yes, then return \yes. Else continue.
       \end{enumerate}
       \item Return \no\
   \end{enumerate}
 
We now prove a claim that will help both for proving correctness and for analyzing the running time of our algorithm.
\begin{claim}\label{claim:sep-instance-size-bound} 
For each $(I',\cP')\in \cF$, it holds that $n_{I'_{L(\cP')}},n_{I'_{R(\cP')}}\leq \lceil 3n/4 \rceil$
\end{claim}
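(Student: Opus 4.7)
My plan is to derive both inequalities from a single endpoint-counting argument based on the two ingredients the claim has to work with: the structural restriction coming from $(I',\cP')$ being fully-separated, and the quantitative endpoint guarantee in Lemma~\ref{lem:family-fully-seperated-instances}(2b). I will focus on the bound for $n_{I'_{L(\cP')}}$; the bound for $n_{I'_{R(\cP')}}$ will then follow by a symmetric argument.

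First I would exploit the fully-separated structure. Because $T(\cP') = B(\cP') = \emptyset$ and $H(I')$ contains no $L(\cP')$--$R(\cP')$ chord, every chord in $I'$ has both of its endpoints inside a single arc, either $L(\cP')$ or $R(\cP')$. The vertices of $I'_{L(\cP')}$ correspond precisely to those chords with both endpoints in $L(\cP')$, and by Lemma~\ref{lem:family-fully-seperated-instances}(2a) each such chord is also in $H(I)$. Letting $e_A$ denote the number of endpoints of chords of $H(I)$ lying in an arc $A$, each vertex of $I'_{L(\cP')}$ consumes two distinct endpoints counted by $e_L$, so I obtain $n_{I'_{L(\cP')}} \leq \lfloor e_L/2 \rfloor$.

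I would then combine this with Lemma~\ref{lem:family-fully-seperated-instances}(2b), which guarantees $e_R \geq \lfloor n/2 \rfloor$. Since $e_L + e_T + e_R + e_B = 2n$, this forces $e_L \leq 2n - \lfloor n/2 \rfloor$, and a short case check on $n \bmod 4$ confirms $\lfloor (2n - \lfloor n/2 \rfloor)/2 \rfloor \leq \lceil 3n/4 \rceil$, completing the bound. I do not anticipate any genuine obstacle here: the proof is essentially a double application of Lemma~\ref{lem:family-fully-seperated-instances} together with the fully-separated definition, and the only care needed is the routine floor/ceiling arithmetic at the end; in particular, the $\lfloor n/2 \rfloor$ guarantee in part (2b) is calibrated precisely so that this inequality goes through for all residues of $n$ modulo $4$.
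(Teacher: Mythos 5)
Your proposal is correct and follows essentially the same route as the paper: both arguments use the $\lfloor n/2\rfloor$ endpoint guarantee on $L(\cP')$ and $R(\cP')$, the total of $2n$ endpoints, and the distinct-endpoints assumption to bound the number of $L$--$L$ (resp.\ $R$--$R$) chords of $H(I)$, then invoke the subinstance relation. The only cosmetic difference is that you appeal to the fully-separated structure to argue every chord of $I'$ lies in $L$ or $R$, which is not strictly needed since $I'_{L(\cP')}$ is \emph{defined} to retain only $L$--$L$ chords.
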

\begin{claimproof}
By Lemma~\ref{lem:family-fully-seperated-instances}, for each $(I',\cP')\in \cF$, the arcs $L(\cP')$ and $R(\cP')$ contain at least $\lfloor n/2 \rfloor$ endpoints of chords in $H(I)$. The total number of endpoints of chords in $H(I)$ is $2n$. As the arcs $L(\cP')$ and $R(\cP')$ are disjoint, they contain at most $2n-\lfloor n/2 \rfloor\leq \lceil 3n/2 \rceil$ endpoints. 
Since we assumed that the chords in $H(I)$ have distinct endpoints, the number of $L(\cP')$-$L(\cP')$ chords in $H(I)$ and the number of $R(\cP')$-$R(\cP')$ chords in $H(I)$ is bounded by $\lceil 3n/4 \rceil$. 

Next recall that $I'$ is a subinstance of $I$ and that $I'_{L(\cP')}$ and $I'_{R(\cP')}$ are the subinstances of $I'$ obtained by deleting all but $L(\cP')$-$L(\cP')$ chords in $H(I')$ and $R(\cP')$-$R(\cP')$ chords in $H(I')$ respectively. This shows that that $n_{I'_{L(\cP')}}$, $n_{I'_{R(\cP')}}\leq \lceil 3n/4 \rceil$.
\end{claimproof}

\paragraph{Correctness.} We prove the correctness of the algorithm by induction on $n$, the number of vertices in the graph in the input instance. If $n\leq 8$, then the algorithm returns the correct answer since it simply brute forces to check whether $I$ is an \yes\ instance or not, and returns \yes\ or \no\ accordingly.
We now prove the correctness of the algorithm for arbitrary $n>8$ assuming, by induction, the correctness of the algorithm on all instances with a graph on $n'$ vertices, where $n'<n$.

By Lemma~\ref{lem:family-fully-seperated-instances}, $I$ is a \yes\ instance if and only if there exists a fully-separated instance $(I',\cP')$ in $\cF$ such that $I'$ is a \yes\ instance.
By Observation~\ref{obs:fully-sep-split}, for any fully-separated instance $(I',\cP')$, $I'$ is a \yes\ instance if and only if both $I'_{L(\cP')}$ and $I'_{R(\cP')}$ are \yes\ instances. Thus we infer that $I$ is a \yes\ instance if and only if there exists a fully-separated instance $(I',\cP')\in \cF$ such that $I'_{L(\cP')}$ and $I'_{R(\cP')}$ are both \yes\ instances. Further by Claim~\ref{claim:sep-instance-size-bound}, for each $(I',\cP')\in \cF$, both instances $I_{L(\cP)}$ and $I_{R(\cP)}$ have at most $\lceil 3n/4 \rceil$ vertices in their graph. Thus, by our inductive assumption, the recursive calls of the algorithm on these instances return the correct answer. So the algorithm in step 3 returns \yes\ if and only if $I$ is an \yes\ instance. If not, in step 4, it correctly returns \no.

\paragraph{Running Time.} Let $T(n)$ denote the running time of the algorithm. Step~1 clearly takes $n^{O(1)}$ time since it is brute force when $n<8$. Step~2, using Lemma~\ref{lem:family-fully-seperated-instances} takes $n^{O(1)}$ time. Further the size of the family $\cF$ returned in step~2 is also $n^{O(1)}$. Step~4 runs in constant time. Using Claim~\ref{lem:family-fully-seperated-instances}, each recursive call in step~3 takes time at most $T(\lceil 3n/4 \rceil)$. 
Combining these, we get the following recurrence:
$$T(n)\leq n^{O(1)} \cdot T(\lceil 3n/4 \rceil) + n^{O(1)}$$
This solves to our required quasi-polynomial running time of $n^{O(\log n)}$
\end{proof}

\subsection{Finding a Family of Fully-Separated Instances (Proof of Lemma~\ref{lem:family-fully-seperated-instances})}
\label{sub:lemmaproof}

Throughout the algorithm, we will make extensive use of the following basic reduction rule that when applied to any instance $I=(G,(H,\alpha),S)$, yields an equivalent subinstance. 

\paragraph{Reduction Rule (\RR).}
Let $v$ be a vertex such that $|S(v)|=1$. Then, delete $v$ from the instance and for each vertex $w$ such that $\alpha(v)$ and $\alpha(w)$ intersect, update its list $S(w):=S(w)\setminus S(v)$. When applying this rule, we maintain the same chord diagram (apart from removing the chords of deleted vertices).
Applying \RR\ exhaustively to an instance means applying \RR\ to the instance, then applying it again to the resulting instance, and so on, until the instance contains no vertex $v$ such that $|S(v)|=1$. See Figure~\ref{fig:reduction_rule} for an example application of \RR.

\begin{observation}\label{obs:RR}
 Applying \RR\ exhaustively to an instance $I$ results in a subinstance $I'$ of $I$ such that $I$ is a \yes\ instance if and only if $I'$ is a \yes\ instance. 
\end{observation}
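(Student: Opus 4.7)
The plan is to prove the observation by establishing that a single application of \RR\ preserves both the subinstance relationship and the \yes/\no\ status, and then iterating this fact. Since every application of \RR\ strictly reduces the number of vertices, the exhaustive process terminates after at most $n$ applications, so a straightforward induction on the number of applications will suffice once the single-step claim is established.

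For the single-step claim, fix a vertex $v$ with $S(v)=\{c\}$ to which \RR\ is applied, and let $I''$ denote the resulting instance. First, $I''$ is a subinstance of $I$ directly by the definition of subinstance in the Preliminaries: $G$ minus $v$ is an induced subgraph of $G$, the chord diagram is inherited (minus the chord of $v$), and each list is either unchanged or has a single color removed. For the equivalence, I would argue both directions explicitly. If $I$ is a \yes\ instance with valid coloring $\col$, then necessarily $\col(v)=c$ (as $c$ is the only color in $S(v)$); therefore for every neighbor $w$ of $v$, $\col(w)\neq c$, so $\col(w)\in S(w)\setminus\{c\}$, which is exactly the updated list in $I''$. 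Hence the restriction of $\col$ to $V(G)\setminus\{v\}$ is a valid coloring of $I''$. Conversely, if $\col'$ is a valid coloring of $I''$, extend it to $G$ by setting $\col'(v):=c$; for any neighbor $w$ of $v$, the updated list in $I''$ excludes $c$, so $\col'(w)\neq c=\col'(v)$, and the extension is a proper coloring respecting $S$.

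For the inductive step, note that the property of being a subinstance is transitive (verifiable directly from the definition), so if $I''$ is a subinstance of $I$ and $I'$ is obtained from $I''$ by further exhaustive application of \RR\ with $I'$ a subinstance of $I''$, then $I'$ is a subinstance of $I$. Combining this with the equivalence at each step yields the observation. The only subtle point to check carefully is that the chord diagram of $I'$ is the one inherited from $I$ by deleting the chords of the removed vertices—this is explicitly preserved by the statement of \RR, so the subinstance condition on $\alpha$ and $H$ is maintained throughout. I do not foresee a genuine obstacle here; the proof is entirely mechanical once the single-step equivalence is spelled out, and the writeup can be kept to a few lines.
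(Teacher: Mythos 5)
Your proof is correct and spells out exactly the routine argument that the paper treats as self-evident (the paper states Observation~\ref{obs:RR} without proof). The single-step equivalence plus induction on the number of applications, with the observation that each step removes a vertex so the process terminates, is precisely the intended reasoning, and your handling of both directions of the coloring equivalence and of the transitivity of the subinstance relation is accurate.
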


The following is a simple observation that we repeatedly use, regarding two sets of chords that satisfy some requirements.
\begin{observation}\label{obs:intersecting-chords}
Let $I=(G,(H,\alpha),S)$ and let $X$ and $Y$ be two disjoint non-empty subsets of chords in $I$ such that all chords in $X$ intersect all chords in $Y$. If $I$ is a \yes\ instance, then in a valid 3-coloring of $G$, either all vertices in $\alpha^{-1}(X)$ have the same color or all vertices in $\alpha^{-1}(Y)$ have the same color.
\end{observation}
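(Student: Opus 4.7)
The plan is to prove the contrapositive using the structure forced on us by a proper coloring with only three colors. The key observation is that since every chord in $X$ intersects every chord in $Y$, the corresponding vertex sets $\alpha^{-1}(X)$ and $\alpha^{-1}(Y)$ induce a complete bipartite graph in $G$: every vertex of $\alpha^{-1}(X)$ is adjacent to every vertex of $\alpha^{-1}(Y)$ by the definition of the chord-intersection graph.

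First I would fix a valid 3-coloring $\col$ of $G$ and suppose, toward the contrapositive, that the vertices in $\alpha^{-1}(X)$ do \emph{not} all receive the same color. Then at least two distinct colors, say $c_1$ and $c_2$ from $\{\texttt{red},\texttt{blue},\texttt{green}\}$, appear among the vertices of $\alpha^{-1}(X)$. Pick a vertex $u_1\in \alpha^{-1}(X)$ with $\col(u_1)=c_1$ and a vertex $u_2\in \alpha^{-1}(X)$ with $\col(u_2)=c_2$.

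Next I would use the complete bipartite structure: every vertex $v\in \alpha^{-1}(Y)$ is adjacent to both $u_1$ and $u_2$, so properness of $\col$ forces $\col(v)\notin\{c_1,c_2\}$. Since only three colors are available, this leaves exactly one choice $c_3\in\{\texttt{red},\texttt{blue},\texttt{green}\}\setminus\{c_1,c_2\}$, and therefore every vertex of $\alpha^{-1}(Y)$ must be colored $c_3$. Here I use the fact that $Y$ is non-empty (so there is at least one vertex to color) and that $X$ is non-empty (so the vertices $u_1,u_2$ exist as witnesses). Thus all vertices in $\alpha^{-1}(Y)$ receive the same color, which is the desired conclusion. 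There is no real obstacle here; the only subtlety is making sure that the statement's nonemptiness hypotheses on both $X$ and $Y$ are actually used, which they are in selecting $u_1,u_2$ and in asserting that $\alpha^{-1}(Y)$ is monochromatic rather than vacuously so.
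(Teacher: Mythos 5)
Your proof is correct and takes essentially the same approach as the paper: both exploit the complete bipartite structure between $\alpha^{-1}(X)$ and $\alpha^{-1}(Y)$ and the fact that only three colors are available. The only cosmetic difference is that you argue directly (two colors in $\alpha^{-1}(X)$ force the third color on all of $\alpha^{-1}(Y)$), whereas the paper phrases the same pigeonhole reasoning as a contradiction after assuming both sets are non-monochromatic.
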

\begin{proof}
Let $I$ be a \yes\ instance and let $\textsf{col}$ be a valid 3-coloring of $G$. Suppose the claim is not true. Then there exist two vertices, say $x_1,x_2$, in $\alpha^{-1}(X)$ having different colors in $\textsf{col}$ and two vertices, say $y_1,y_2$, in $\alpha^{-1}(Y)$ having different colors in $\textsf{col}$. Since there are precisely three possible colors \{\texttt{red}, \texttt{blue}, \texttt{green}\} for each vertex, by the pigeon hole principle, one vertex among $x_1,x_2$ and one vertex among $y_1,y_2$ must have the same color in $\textsf{col}$. But this is a contradiction since they are adjacent. 
\end{proof}

\begin{figure}[htbp]
  \centering

  \begin{subfigure}[b]{0.31\textwidth}
    \includegraphics[width=\textwidth]{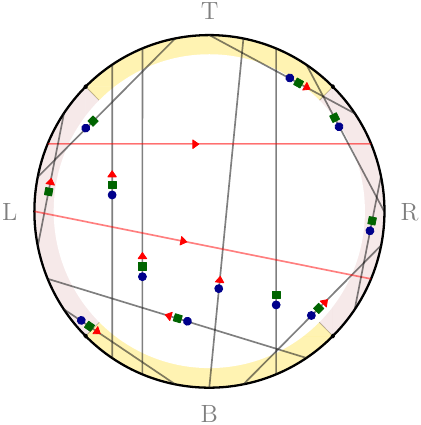}
    \caption{}
    \label{subfig:0a}
  \end{subfigure}
  \hspace{5pt}
  \begin{subfigure}[b]{0.31\textwidth}
    \includegraphics[width=\textwidth]{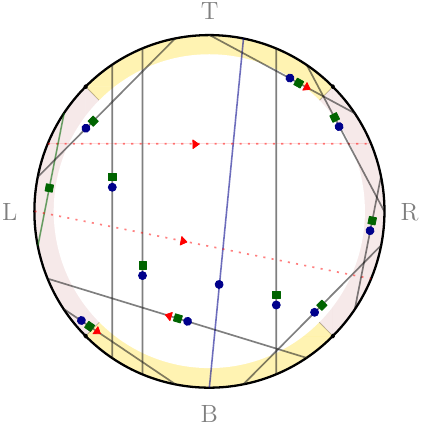}
    \caption{}
    \label{subfig:0b}
  \end{subfigure}
  \hspace{5pt}
  \begin{subfigure}[b]{0.31\textwidth}
    \includegraphics[width=\textwidth]{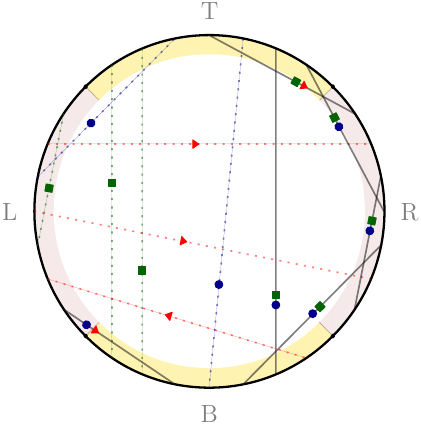}
    \caption{}
    \label{subfig:0c}
  \end{subfigure}
  
  \caption{In Subfigure~\ref{subfig:0a}, as all $L$-$R$ chords intersect all $T$-$B$ chords, Observation~\ref{obs:intersecting-chords} implies that, in any valid 3-coloring, at least one of these two sets contain chords of at most one color. In line with this, all $L$-$R$ chords in Subfigure~\ref{subfig:0a}, are colored \texttt{red}. Subfigure~\ref{subfig:0b} shows the instance after applying \RR\ to all $L$-$R$ chords in the instance, while Subfigure~\ref{subfig:0c} has the instance after \RR\ is applied exhaustively. We use dotted lines to depict deleted chords.}
  \label{fig:reduction_rule}
\end{figure}

We next show how to combine the above observation and \RR\ to obtain a useful algorithm that we use as a helper in our main result. Given an instance $I$ and two disjoint sets of chords $X$, $Y$ in $I$ such that all chords in $X$ intersect all chords in $Y$, the algorithm will help eliminate either all chords in $X$ or all chords in $Y$ from $I$. See Figure~\ref{fig:reduction_rule} for an example of this with $X$ being the set of $L$-$R$ chords and $Y$ the set of $B$-$T$ chords.

\begin{lemma}\label{lem:helper-eliminate-chords}
    There exists an algorithm that takes as input an instance $I=(G,(H,\alpha),S)$ and two disjoint, possibly empty, subsets $X$ and $Y$ of chords in $H(I)$ such that all chords in $X$ intersect all chords in $Y$, runs in time $n^{O(1)}$, and returns a family of at most six subinstances of $I$ such that:
    \begin{itemize}
        \item $I$ is a \yes\ instance if and only if the family contains a subinstance $I'$ of $I$ such that $I'$ is a \yes\ instance.
        \item For each subinstance $I'$ of $I$ in the family, either $H(I')\cap X=\emptyset$ or $H(I')\cap Y=\emptyset$.
    \end{itemize}
\end{lemma}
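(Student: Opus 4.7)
I would branch on the dichotomy supplied by Observation~\ref{obs:intersecting-chords}: in any valid $3$-coloring, either all chords of $X$ receive a common color or all chords of $Y$ do. This yields at most $2 \cdot 3 = 6$ branches, indexed by a pair $(Z, c)$ with $Z \in \{X, Y\}$ and $c \in \{\texttt{red}, \texttt{blue}, \texttt{green}\}$, which is exactly the budget we are allowed.

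\textbf{Construction of each branch.} For a branch $(Z, c)$, I first check two feasibility conditions: (i)~$c \in S(\alpha^{-1}(z))$ for every $z \in Z$, and (ii)~no two chords in $Z$ intersect. If either fails, the branch cannot correspond to any valid coloring in which all of $Z$ gets color $c$, and I discard it. Otherwise, I form $I_{Z,c}$ by setting $S(\alpha^{-1}(z)) := \{c\}$ for every $z \in Z$ and then applying \RR\ exhaustively. Because (ii)~guarantees that the chords in $Z$ are pairwise non-adjacent, processing them one by one with \RR\ deletes each $z$ without ever reducing another $z' \in Z$ to an empty list. Consequently every chord in $Z$ is removed, so $H(I_{Z,c}) \cap Z = \emptyset$, as required. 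The edge case where $X$ or $Y$ is empty is handled by returning $\{I\}$ directly, since the required disjointness is then vacuous.

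\textbf{Correctness.} For the backward direction, each $I_{Z,c}$ is by construction a subinstance of $I$ (lists are only shrunk, and \RR\ only removes vertices/colors), and Observation~\ref{obs:RR} implies that \RR\ preserves the \yes/\no\ status, so any valid coloring of $I_{Z,c}$ lifts (via the singleton-list assignments made by \RR) to a valid coloring of $I$. For the forward direction, suppose $I$ is \yes\ via a coloring $\texttt{col}$. If $X$ or $Y$ is empty the singleton family $\{I\}$ works; otherwise Observation~\ref{obs:intersecting-chords} supplies a $Z \in \{X, Y\}$ whose preimages under $\alpha$ are monochromatic of some color $c$ under $\texttt{col}$. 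Validity of $\texttt{col}$ yields (i), and distinct same-colored vertices being non-adjacent yields (ii), so branch $(Z, c)$ is not discarded; the restriction of $\texttt{col}$ to the surviving vertices is then a valid coloring of $I_{Z,c}$.

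\textbf{Complexity and main difficulty.} Each branch performs one list-shrinking pass and one exhaustive application of \RR, both in $n^{O(1)}$ time, and there are at most six branches. There is no deep obstacle here, as the lemma is essentially a clean packaging of Observations~\ref{obs:intersecting-chords} and~\ref{obs:RR}; the one subtlety that I expect to require care is the pre-check of condition~(ii), without which \RR\ could, on a genuinely \no-branch, leave behind a $Z$-chord with an empty list and thereby violate the hard requirement $H(I') \cap Z = \emptyset$ even though the instance is \no.
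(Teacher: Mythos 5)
Your proposal is correct and takes essentially the same approach as the paper: the paper's algorithm also branches over the six $(Z,c)$ combinations, skips the $X$-branches (resp.\ $Y$-branches) when two chords of $X$ (resp.\ $Y$) intersect, fixes singleton lists on $\alpha^{-1}(Z)$, and applies \RR\ exhaustively, with correctness resting on Observations~\ref{obs:RR} and~\ref{obs:intersecting-chords}. The only cosmetic differences are that you package the crossing-pair test as a per-branch feasibility check rather than a gate before the color loop, and you handle $X=\emptyset$ or $Y=\emptyset$ by returning $\{I\}$ directly rather than letting the color loop fire vacuously; both are equivalent to what the paper does.
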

\begin{proof} Let $I=(G,(H,\alpha),S)$; we first present the algorithm:
\begin{enumerate}
    \item Initialize $\cF:=\emptyset$. 
     \item If there exists two chords in $X$ that intersect, proceed to the next step. Else, for each color $c\in$ \{\texttt{red}, \texttt{blue}, \texttt{green}\} such that for each vertex $v\in \alpha^{-1}(X)$, $c\in S(v)$:
    \begin{enumerate}
        \item Let $I':= I$. Set $S(v)=\{c\}$ in $I'$ for each vertex $v$ in $\alpha^{-1}(X)$. 
        \item Apply \RR\ exhaustively to $I'$ to 
        obtain a subinstance $I''$ of $I'$. 
        \item Add $I''$ to $\cF$.
    \end{enumerate}
    \item If there exists two chords in $Y$ that intersect, proceed to the next step. Else, for each color $c\in$ \{\texttt{red}, \texttt{blue}, \texttt{green}\} such that for each vertex $v\in \alpha^{-1}(Y)$, $c\in S(v)$:
    \begin{enumerate}
        \item Let $I':= I$. Set $S(v)=\{c\}$ in $I'$ for each vertex $v$ in $\alpha^{-1}(Y)$. 
        \item Apply \RR\ exhaustively to $I'$ to 
        obtain a subinstance $I''$ of $I'$. 
        \item Add $I''$ to $\cF$.
    \end{enumerate}
    \item Return $\cF$
\end{enumerate}

It is straightforward to see that the algorithm above runs in $n^{O(1)}$ time and that the size of the family is at most six.
We now show that the family $\cF$ only contains subinstances of $I$.  The $I'$ constructed in any iteration of steps $2(a)$ or $3(a)$ is clearly a subinstance of $I$. This is because we only shorten the list of colors for vertices in $\alpha^{-1}(X)$ or $\alpha^{-1}(Y)$.
Further, we know that exhaustively applying \RR\ to an instance always returns a subinstance of that instance. So any $I''$ obtained from some $I'$ in an iteration of step $2(b)$ or $3(b)$ is a subinstance of $I'$ and so a subinstance of $I$.

\paragraph{Property~2.}Let $I''\in \cF$ be added to $\cF$ in an iteration of step $2(c)$ and obtained by exhaustively applying \RR\ on an instance $I'$ constructed in step $2(a)$ in the same iteration of step 2. Then for each vertex $v$ in $\alpha^{-1}(X)$, the size of the list of colors $|S(v)|$ in $I'$ is one and thus exhaustive application of \RR\ on $I'$ removes it. Consequently $I''$ does not contain any chords from $X$, i.e., $H(I'')\cap X=\emptyset$. By a similar argument, if $I''\in \cF$ was added to $\cF$ in step $3(c)$, $H(I'')\cap Y=\emptyset$. This proves property~2.

\paragraph{Property~1.}
Let $I''$ be a \yes\ instance in $\cF$. We show that $I$ is also a \yes\ instance. Without loss of generality, let $I''$ be obtained in an iteration of step~$2(b)$ by applying \RR\ exhaustively to an $I'$ constructed in step~$2(a)$ of the same iteration of step~$2$. 
%
By Observation~\ref{obs:RR}, we know that if $I''$ is a \yes\ instance, then $I'$ is a \yes\ instance.
Since $I'$ is a \yes\ instance, $I$ is also a \yes\ instance simply by how $I'$ is constructed --- we only shorten the list of colors for vertices in $\alpha^{-1}(X)$. 
We can make a similar argument if $I''$ is obtained in an iteration of step~$3(b)$.

Let $I$ be a \yes\ instance. We now show that there exists $I''\in \cF$ that is also a \yes\ instance. Suppose $X=\emptyset$ or $Y=\emptyset$, then there exist $I''\in \cF$ that was obtained by exhaustively applying \RR\ to $I'=I$ by the working of step 2 and step 3. This $I''$ is guaranteed to be a \yes\ instance if $I$ is a \yes\ instance. 
Therefore, let $X\neq \emptyset$ and $Y\neq\emptyset$, and let \textsf{col} be a valid 3-coloring of $G$. Then by Observation~\ref{obs:intersecting-chords} either all vertices in $\alpha^{-1}(X)$ have the same color or all vertices in $\alpha^{-1}(Y)$ have the same color. Without loss of generality, let all vertices in $\alpha^{-1}(X)$ have the same color $c$. Then the iteration of step~2 with color $c$ constructs a subinstance $I'$, this instance will be a \yes\ instance because the coloring \textsf{col} is a valid 3-coloring for $I'$ as well. Now since $I'$ is a \yes\ instance, the $I''$ constructed by applying \RR\ exhaustively on $I'$ will also be a \yes\ instance. 
\end{proof}

We are now ready to start discussing how to prove Lemma~\ref{lem:family-fully-seperated-instances}, and construct our desired family of fully-separated instances. Recall that a fully-separated instance $(I,\cP)$ has no $L(\cP)$-$R(\cP)$ chords in $H(I)$ and $T(\cP)=B(\cP)=\emptyset$, where $I$ is a problem instance and $\cP$ is a circle partition. 
We start by finding a family of tuples $(I,\cP)$ where each tuple satisfies only the first property among the two properties of fully-separated instances. We call a tuple $(I,\cP)$ a {\em semi-separated} instance if it has no $L(\cP)$-$R(\cP)$ chords in $H(I)$. Observe that a fully-separated instance $(I,\cP)$ is a semi-separated instance with $T(\cP)=B(\cP)=\emptyset$.  We now show how to find a family of semi-separated instances with our desired properties.

The algorithm is fairly simple, we start with a circle partition $\cP$ such that each arc in the partition has nearly equal number of endpoints of chords in $H(I)$ where $I$ is the input instance. We use our helper algorithm from Lemma~\ref{lem:helper-eliminate-chords}, on $I$ along with the set of $L(\cP)$-$R(\cP)$ chords and $T(\cP)$-$B(\cP)$ chords in $H(I)$. This gives us a family $\cF$ of subinstances of $I$, each of which either have no $L(\cP)$-$R(\cP)$ chords or no $T(\cP)$-$B(\cP)$ chords. We now use $\cF$ to construct a family of semi-separated instances. If $I'\in \cF$ does not contain $L(\cP)$-$R(\cP)$ chords then $(I',\cP)$ is semi-separated. But for $I'\in \cF$ that do not contain $T(\cP)$-$B(\cP)$ chords, we rotate the partition $\cP$ counterclockwise so that $T$ becomes $L$ and $B$ becomes $R$, to obtain $\tilde{\cP}$. Then in this case $(I',\tilde{\cP})$ is semi-separated. See Figure~\ref{fig:branching_step_1} below for an example.

\begin{figure}[htbp]
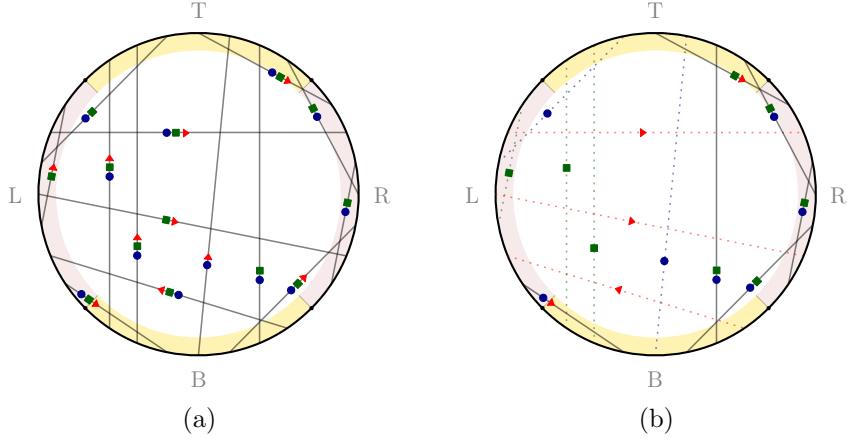

    \centering
    
    \begin{subfigure}[b]{0.31\textwidth}
    \includegraphics[width=\textwidth]{Diagrams/Branching_Step_A_1.pdf}
    \caption{}
    \label{subfig:3a}
    \end{subfigure}
    \hspace{20pt}
    \begin{subfigure}[b]{0.31\textwidth}
    \includegraphics[width=\textwidth]{Diagrams/Branching_Step_A_4.pdf}
    \caption{}
    \label{subfig:3c}
    \end{subfigure}
    
    \caption{Subfigure~\ref{subfig:3a} shows a {\sc Circle Graph List 3-Coloring} instance $I$ and a circle partition $\cP=(L,T,R,B)$ with equal number of chord endpoints in each arc. Subfigure~\ref{subfig:3c} shows a semiseparated instance $(I',\cP)$, in the family obtained after applying Lemma~\ref{lem:semi-separated} to $(I,\cP)$. 
    Lemma~\ref{lem:semi-separated} internally calls Lemma~\ref{lem:helper-eliminate-chords} on $I$, $L$-$R$ chords, $T$-$B$ chords. This instance is obtained when Lemma~\ref{lem:helper-eliminate-chords} eliminates $L$-$R$ chords by coloring them \texttt{red}.}
    \label{fig:branching_step_1}
\end{figure}
We now state and prove the lemma which makes this formal. 

\begin{lemma}\label{lem:semi-separated}
    There exists an algorithm that takes as input an instance $I$ of \textsc{Circle Graph List 3-Coloring}, runs in time $n^{O(1)}$, and returns a family containing at most six semi-separated instances such that:
    \begin{enumerate}
        \item $I$ is a \yes\ instance if and only if the family contains a semi-separated instance $(I',P')$ where $I'$ is a \yes\ instance.
         \item For each semi-separated instance $(I',\cP')$ in the family:
        \begin{enumerate}
            \item $I'$ is a subinstance of $I$
            \item The arcs $L(\cP')$ and $R(\cP')$ contain at least $\lfloor n/2 \rfloor$ endpoints of chords in $H(I)$ 
        \end{enumerate}
    \end{enumerate}
\end{lemma}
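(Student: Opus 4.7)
The plan is to reduce the construction of semi-separated instances to a single application of Lemma~\ref{lem:helper-eliminate-chords} after choosing a sufficiently balanced circle partition as the initial frame. First I would compute a circle partition $\cP=(L,T,R,B)$ in which each of the four arcs contains between $\lfloor n/2\rfloor$ and $\lceil n/2\rceil$ endpoints of chords in $H(I)$. Since the $2n$ endpoints of $H(I)$ are pairwise distinct, this reduces to walking clockwise around the circle and placing three cuts between appropriately chosen consecutive endpoints, which takes $n^{O(1)}$ time. Because every $L$-$R$ chord crosses every $T$-$B$ chord, the sets $X:=\{L\text{-}R\text{ chords in }H(I)\}$ and $Y:=\{T\text{-}B\text{ chords in }H(I)\}$ satisfy the hypothesis of Lemma~\ref{lem:helper-eliminate-chords}. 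I would invoke that lemma on $(I,X,Y)$ to obtain a family $\cF$ of at most six subinstances of $I$, each eliminating either all of $X$ or all of $Y$, and such that $I$ is a \yes\ instance iff some member of $\cF$ is.

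Next I would turn $\cF$ into the desired family of semi-separated instances. For each $I'\in\cF$ with $H(I')\cap X=\emptyset$, add the tuple $(I',\cP)$; since $I'$ has no $L(\cP)$-$R(\cP)$ chord, this tuple is semi-separated. For each $I'\in\cF$ with $H(I')\cap Y=\emptyset$, pair $I'$ instead with the counterclockwise rotation $\tilde{\cP}:=(T,R,B,L)$, whose new left and right arcs are the old top and bottom; then the absence of $T$-$B$ chords in $H(I')$ translates into the absence of $L(\tilde{\cP})$-$R(\tilde{\cP})$ chords, so $(I',\tilde{\cP})$ is semi-separated. Property~2(a) is inherited directly from Lemma~\ref{lem:helper-eliminate-chords}; property~2(b) holds because, in both cases, $L(\cP')\cup R(\cP')$ is the union of two of the four balanced arcs and therefore contains at least $2\lfloor n/2\rfloor\geq\lfloor n/2\rfloor$ endpoints of $H(I)$; and property~1 follows by translating the corresponding statement of Lemma~\ref{lem:helper-eliminate-chords}.

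I do not foresee a serious technical obstacle. The one nontrivial design decision is to balance the initial partition across all four arcs, not merely across $L$ and $R$; this is exactly what ensures that the endpoint count in property~2(b) survives the possible relabelling to $\tilde{\cP}$ when $X$ rather than $Y$ is the set that gets eliminated.
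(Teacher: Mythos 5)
Your proposal is essentially the paper's own proof: compute a circle partition balancing the $2n$ endpoints across the four arcs, apply Lemma~\ref{lem:helper-eliminate-chords} to the $L$-$R$ and $T$-$B$ chord sets, and pair each resulting subinstance with $\cP$ or with the rotated partition $\tilde{\cP}$ according to which chord set was eliminated. One small imprecision in your verification of property~2(b): the property is used downstream (in the size-bound claim inside the proof of Theorem~\ref{thm:main_thm}) to bound the number of $L$-$L$ chords and $R$-$R$ chords \emph{separately}, so what must be argued is that \emph{each} of $L(\cP')$ and $R(\cP')$ contains at least $\lfloor n/2\rfloor$ endpoints, not merely that their union does; your construction delivers this (each of $L(\cP')$, $R(\cP')$ is one of the four balanced arcs), but your justification sentence only establishes the weaker union statement, so it should be rephrased to the per-arc bound.
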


\begin{proof}
We start with an algorithm that takes as input an instance $I=(G,(H,\alpha),S)$ of \textsc{Circle Graph List 3-Coloring} and returns a family $\mathcal{S}$ of semi-separated instances:
\begin{enumerate}
    \item Construct an arbitrary circle partition $\cP$ whose arcs $L(\cP)$, $T(\cP)$, $R(\cP)$ and $B(\cP)$ contain at least $\lfloor n/2 \rfloor$ endpoints of chords in $I(H)$ and define $\tilde{\cP}$ to be the circle partition with $L(\tilde{\cP}) := T(\cP)$, $T(\tilde{\cP}) := R(\cP)$, $R(\tilde{\cP}) := B(\cP)$ and $B(\tilde{\cP}) := L(\cP)$. 
    \item Let $H_{LR}$ and $H_{TB}$ be the set of $L(\cP)$-$R(\cP)$ chords and $T(\cP)$-$B(\cP)$ chords in $H(I)$ respectively.
    \item Apply Lemma~\ref{lem:helper-eliminate-chords} on $I,H_{LR},H_{TB}$ to obtain a family $\cF$ of at most six subinstances of $I$.
    \item Initialize $\mathcal{S}:=\emptyset$. 
    \item For each $I'\in \cF$:
    \begin{enumerate}
        \item If $H(I')\cap H_{LR}=\emptyset$ then add $(I',\cP)$ to $\mathcal{S}$. If not proceed to (b).
        \item If $H(I')\cap H_{TB}=\emptyset$ then add $(I',\tilde{\cP})$ to $\mathcal{S}$
    \end{enumerate}
    \item Return $\mathcal{S}$
\end{enumerate}

We first show that each tuple in $\mathcal{S}$ is a semi-separated instance. By Lemma~\ref{lem:helper-eliminate-chords}, each $I'\in \cF$ is a subinstance of $I$. 
When $H(I')\cap H_{LR}=\emptyset$, it implies $I'$ has no $L(\cP)$-$R(\cP)$ chord. So when a tuple $(I',\cP)$, $I'\in \cF$ is added to $\mathcal{S}$ in step~5(a), $(I',\cP)$ is a semi-separated instance. 
Next observe that $T(\cP)$-$B(\cP)$ chords in $H(I)$ are the same as $L(\tilde{\cP})$-$R(\tilde{\cP})$ chords in $H(I)$. So when $H(I')\cap H_{TB}=\emptyset$, it implies $I'$ has no $L(\tilde{\cP})$-$R(\tilde{\cP})$ chord.
Thus when a tuple $(I',\tilde{\cP})$, $I'\in \cF$ is added to $\mathcal{S}$ in step~5(b), $(I',\tilde{\cP})$ is a semi-separated instance.
Also, $\mathcal{S}$ has size at most six because $\mathcal{F}$ has size at most six by Lemma~\ref{lem:helper-eliminate-chords}. Finally since $|\cF|\leq 6$ and the algorithm from Lemma~\ref{lem:helper-eliminate-chords} runs in $n^{O(1)}$ time, it is easy to see that the entire algorithm runs in $n^{O(1)}$ time. 

We are left to show that the returned family $\mathcal{S}$ has the desired properties. To this end, note that property~2(b) in the lemma just follows by how $\cP$ and $\tilde{\cP}$ are defined. This combined with the fact that all instances in $\cF$ are subinstances of $I$, implies property~2. To see property~1, observe that by Lemma~\ref{lem:helper-eliminate-chords}, we have that $I$ is a \yes\ instance if and only if there exists $I'\in \cF$ such that $I'$ is a \yes\ instance. Further for each $I'\in \cF$, $I'$ is a subinstance of $I$ and either $H(I')\cap H_{LR}=\emptyset$ or $H(I')\cap H_{TB}=\emptyset$. Consequently, for each $I'\in \cF$ either $(I',\cP)\in \mathcal{S}$ or $(I',\tilde{\cP})\in \mathcal{S}$. These all together imply $I$ is a \yes\ instance if and only if there exists $(I',\cP')\in \mathcal{S}$ such that $I'$ is a \yes\ instance and therefore property~1 holds, which concludes the proof of the lemma.
\end{proof}

To obtain our desired family of fully-separated instances, we first apply Lemma~\ref{lem:semi-separated} on $I$ to get a family $\cS$ of semi-separated instances. Next we transform each semi-separated instance in $\cS$ to a ``nice" family of fully-separated instances and take their union to obtain our desired family. Recall that a semi-separated instance $(I,\cP)$ has no $L(\cP)$-$R(\cP)$ chords but $T(\cP)$ and $B(\cP)$ are not necessarily empty. We now show how to convert a semi-separated instance into a ``nice" family of fully-separated instances by growing the arcs $L(\cP)$ and $R(\cP)$ while shrinking $T(\cP)$ and $B(\cP)$, till the latter arcs become empty. 

\begin{figure}[htbp]
  \centering

  \begin{subfigure}[b]{0.31\textwidth}
    \includegraphics[width=\textwidth]{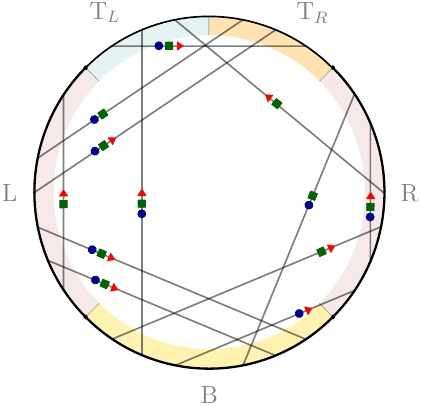}
    \caption{}
    \label{subfig:4a}
  \end{subfigure}
  \hspace{20pt}
  \begin{subfigure}[b]{0.31\textwidth}
    \includegraphics[width=\textwidth]{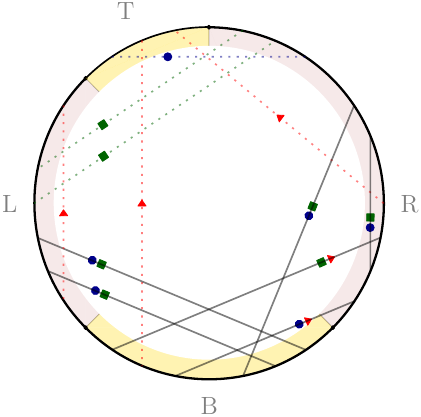}
    \caption{}
    \label{subfig:4c}
  \end{subfigure}
  
  \caption{Subfigure~\ref{subfig:4a} shows a semi-separated instance $(I,\cP)$. Subfigure~\ref{subfig:3c} shows the semiseparated instance $(I',\cP')$, obtained by eliminating $L$-$T_R$ chords, growing the arc $R$ to include $T_R$ and shrinking $T$. Lemma~\ref{lem:semi-fully-sep} internally calls Lemma~\ref{lem:helper-eliminate-chords} on $I$, $L$-$T_R$ chords, $T_L$-$R$ chords. This instance is obtained when Lemma~\ref{lem:helper-eliminate-chords} eliminates $L$-$T_R$ chords by coloring them \texttt{green}.}
  \label{fig:branching_step_2}
\end{figure}

  

The basic idea is to first break one of the arcs $T(\cP)$ or $B(\cP)$; say we break $T(\cP)$. We break it into two parts $T_L$ and $T_R$ such that the number of endpoints of chords in $H(I)$ are distributed equally. Then the sets $H_L$ of $L(\cP)$–$T_R$ chords and $H_R$ of $T_L$–$R(\cP)$ chords are in the desired form: they are disjoint, and every chord in $H_L$ intersects every chord in $H_R$. We now invoke our helper algorithm from Lemma~\ref{lem:helper-eliminate-chords}, on $I$ along with $H_L$ and $H_R$ to obtain a family of subinstances $\cF$ of $I$, such that for each subinstance $I' \in \cF$, the set of chords $H(I')$ contains either no chords from $H_L$ or no chords from $H_R$.

Now for some $I'\in \cF$, say it is the former case of no chords from $H_L$, we can grow the arc $R(\cP)$ in the partition $\cP$ to include $T_R$ and shrink $T(\cP)$ by removing $T_R$. Refer Figure~\ref{fig:branching_step_2} for an example. Using the number of chord endpoints in $T(\cP)$ and $B(\cP)$ as a measure of progress, we recursively invoke the algorithm on the resulting instances till $T(\cP)$ and $B(\cP)$ become empty. Since the measure reduces by a constant factor in each step, the recursion depth is $O(\log n)$, with at most $O(1)$ recursive calls per level. This leads to a total running time of $c^{O(\log n)} = n^{O(1)}$ for some constant $c$. 
We make this formal in the following lemma.
%
\begin{lemma}\label{lem:semi-fully-sep}
    There exists an algorithm that takes as input a semi-separated instance $(I,\cP)$, runs in time $n^{O(1)}$, and returns a family containing at most $n^{O(1)}$ fully-separated instances such that:
    \begin{enumerate}
        \item $I$ is a \yes\ instance if and only if the family contains a fully-separated instance ($I'$,$\cP'$) where $I'$ is a \yes\ instance. 
        \item For each fully-separated instance ($I'$,$\cP'$) in the family:
        \begin{enumerate}
            \item $I'$ is a subinstance of $I$
            \item $L(\cP)\subseteq L(\cP')$ and $ R(\cP)\subseteq R(\cP')$
        \end{enumerate}
    \end{enumerate}
\end{lemma}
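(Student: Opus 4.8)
The plan is to prove Lemma~\ref{lem:semi-fully-sep} by induction on the number $m$ of chord endpoints lying in $T(\cP)\cup B(\cP)$, following the informal description given just above the lemma statement. The base case is $m=0$: then $T(\cP)=B(\cP)=\emptyset$, so $(I,\cP)$ is already a fully-separated instance, and we return the singleton family $\{(I,\cP)\}$. This trivially satisfies all the properties ($I$ is a subinstance of itself, and $L(\cP)\subseteq L(\cP)$, $R(\cP)\subseteq R(\cP)$).

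For the inductive step $m\ge 1$, at least one of $T(\cP)$, $B(\cP)$ is nonempty; choose the one with the larger number of endpoints, say it is $T(\cP)$ (the case $B(\cP)$ is symmetric, and we invoke the symmetry explicitly by a rotation/relabelling of the circle partition). Split $T(\cP)$ into sub-arcs $T_L$ and $T_R$ with $T_L$ incident to $L(\cP)$ and $T_R$ incident to $R(\cP)$, such that each contains the same number of endpoints of chords in $H(I)$ up to $\pm 1$; this is possible since $T(\cP)\cup R(\cP)\cup B(\cP)\cup L(\cP)$ is the whole circle and $L(\cP)\cup T(\cP)$ and $T(\cP)\cup R(\cP)$ are arcs, so $T(\cP)$ sits ``between'' $L(\cP)$ and $R(\cP)$ in the cyclic order. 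Let $H_L$ be the set of $L(\cP)$--$T_R$ chords and $H_R$ the set of $T_L$--$R(\cP)$ chords in $H(I)$. These two sets are disjoint (one has an endpoint in $T_R$ and the other in $T_L$, and $T_L,T_R$ are disjoint), and every chord of $H_L$ crosses every chord of $H_R$: an $L$--$T_R$ chord and a $T_L$--$R$ chord have their four endpoints appearing in the cyclic order $L$, then $T_L$, then $T_R$, then $R$, i.e. in the interleaved pattern that forces a crossing. Apply Lemma~\ref{lem:helper-eliminate-chords} to $(I,H_L,H_R)$ to get a family $\cF$ of at most six subinstances of $I$, each of which contains no chord of $H_L$ or no chord of $H_R$. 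For each $I'\in\cF$ we then build a new semi-separated instance: if $H(I')\cap H_L=\emptyset$, let $\cP'$ be the circle partition with $L(\cP')=L(\cP)$, $T(\cP')=T_L$, $R(\cP')=R(\cP)\cup T_R$, $B(\cP')=B(\cP)$; if instead $H(I')\cap H_R=\emptyset$, let $\cP'$ have $L(\cP')=L(\cP)\cup T_L$, $T(\cP')=T_R$, $R(\cP')=R(\cP)$, $B(\cP')=B(\cP)$. (If both hold, pick either.) One checks these are genuine circle partitions — merging $T_R$ with $R(\cP)$ keeps $R(\cP)\cup T_R$ an arc since $T_R$ is the sub-arc of $T(\cP)$ incident to $R(\cP)$, and the other three ``consecutive union'' conditions are inherited — and that $(I',\cP')$ is semi-separated: there were no $L(\cP)$--$R(\cP)$ chords in $I$ (hence none in the subinstance $I'$), and the only new potential $L(\cP')$--$R(\cP')$ chords are exactly $L(\cP)$--$T_R$ chords, which are precisely the chords of $H_L$ that $I'$ no longer contains (resp. $T_L$--$R(\cP)$ chords, i.e. $H_R$, in the other case). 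Crucially, $L(\cP)\subseteq L(\cP')$ and $R(\cP)\subseteq R(\cP')$ in both cases, and the number of endpoints in $T(\cP')\cup B(\cP')$ is at most $\tfrac34$ of $m$: since $T(\cP)$ had at least half the endpoints of $T(\cP)\cup B(\cP)$ and we removed $T_R$ (or $T_L$), which carries at least a third of the endpoints of $T(\cP)$ minus a rounding term, the measure drops below $m$ by a constant fraction — I will bound it cleanly as $\le \lceil 3m/4\rceil < m$ for $m\ge 1$ (the base case $m\le$ small constant can be absorbed, or handled directly since $m=0$ is already the base).

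Now recurse: apply the algorithm inductively to each $(I',\cP')$ to obtain a family of fully-separated instances, and let the output family be the union over all $I'\in\cF$ of these. The size bound is $n^{O(1)}$: the recursion tree has branching at most $6$ and depth $O(\log n)$ (since the measure $m\le 2n$ drops by factor $\le 3/4$ each level, it reaches $0$ after $O(\log n)$ levels), so the total number of leaves, hence fully-separated instances produced, is $6^{O(\log n)}=n^{O(1)}$; the running time is likewise $n^{O(1)}$ since Lemma~\ref{lem:helper-eliminate-chords} runs in $n^{O(1)}$ time per node and there are $n^{O(1)}$ nodes. Property~2(a) (subinstance of $I$) follows by transitivity: each $I'\in\cF$ is a subinstance of $I$ by Lemma~\ref{lem:helper-eliminate-chords}, and by induction every fully-separated instance produced from $(I',\cP')$ has first coordinate a subinstance of $I'$, hence of $I$. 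Property~2(b) ($L(\cP)\subseteq L(\cP')$, $R(\cP)\subseteq R(\cP')$) follows by transitivity as well: $L(\cP)\subseteq L(\cP')$ at this step, and by induction every final partition $\cP''$ satisfies $L(\cP')\subseteq L(\cP'')$ and $R(\cP')\subseteq R(\cP'')$. Property~1 (the yes-instance equivalence) chains Lemma~\ref{lem:helper-eliminate-chords}'s property~1 — $I$ is yes iff some $I'\in\cF$ is yes — with the inductive hypothesis — $I'$ is yes iff some fully-separated instance in the family built from $(I',\cP')$ has a yes first coordinate — to conclude $I$ is yes iff the union family contains a fully-separated instance with yes first coordinate.

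The main obstacle I anticipate is not any single hard argument but the careful bookkeeping of the geometric claims: verifying that after merging a sub-arc of $T(\cP)$ into $R(\cP)$ (or $L(\cP)$) the result is still a valid circle partition, that the newly-created instance really is semi-separated (i.e. that the only chords threatening the $L$--$R$-free invariant are exactly the ones eliminated by Lemma~\ref{lem:helper-eliminate-chords}), and that $H_L$, $H_R$ are disjoint with the complete-crossing property needed to even apply Lemma~\ref{lem:helper-eliminate-chords}. A secondary subtlety is getting the measure-drop constant exactly right with floors and ceilings — one must ensure that splitting $T(\cP)$ evenly and removing one half, when combined with the fact that $T(\cP)$ is the larger of $T(\cP),B(\cP)$, genuinely yields a multiplicative decrease of $T(\cP)\cup B(\cP)$ rather than merely a decrease by $1$; the cleanest route is to observe that the removed sub-arc contains at least $\lfloor |T(\cP)|/2\rfloor \ge \lfloor m/4\rfloor$ endpoints, so for $m$ above a small constant the new measure is $\le \lceil 3m/4 \rceil$, and one handles $m$ below that constant by noting the recursion still terminates (each step strictly decreases $m$ when $m\ge 1$).
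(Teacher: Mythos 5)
Your overall strategy is exactly the paper's: induct on the number $m$ of endpoints in $T(\cP)\cup B(\cP)$, split the heavier of $T(\cP),B(\cP)$ into two balanced sub-arcs, feed the $L$--$T_R$ chords and $T_L$--$R$ chords to Lemma~\ref{lem:helper-eliminate-chords}, extend $L$ or $R$ accordingly, and recurse; the disjointness/crossing verification, the semi-separatedness of the new tuples, the transitivity arguments for properties 2(a)--(b), the chaining for property~1, and the $6^{O(\log n)}=n^{O(1)}$ bound all match. However, there is a genuine gap in your termination argument, precisely in the corner case you wave at in your last sentence. When the heavier arc $X$ contains exactly one endpoint ($x=1$, which happens when $m\in\{1,2\}$), your balanced split gives one sub-arc with one endpoint and one with zero; say $T_L$ has zero. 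Then $H_R=\emptyset$, so Lemma~\ref{lem:helper-eliminate-chords} necessarily puts into its family an instance $I'$ with $H(I')\cap H_R=\emptyset$ but possibly $H(I')\cap H_L\neq\emptyset$ (the branch that ``colors'' the empty set $Y$ just applies \RR\ to $I$). For that $I'$ you are forced into the branch that merges the empty sub-arc $T_L$ into $L$, the measure does not decrease, and the recursion need not terminate. Your claim that ``each step strictly decreases $m$ when $m\ge 1$'' is exactly what fails here, and your own quantitative bound (the removed sub-arc has $\ge\lfloor x/2\rfloor$ endpoints) gives $0$ when $x=1$. The paper closes this with an explicit special case for $x=1$: the unique endpoint in $X$ belongs to a chord that cannot simultaneously be an $L$--$X$ and an $X$--$R$ chord, so one can deterministically absorb $X$ into $R$ (if there are no $L$--$X$ chords) or into $L$ (otherwise) without invoking Lemma~\ref{lem:helper-eliminate-chords} at all, decreasing $m$ by one.

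A second, minor slip: in your base case $m=0$ you conclude $T(\cP)=B(\cP)=\emptyset$ and return $\{(I,\cP)\}$, but having no chord endpoints in $T(\cP)\cup B(\cP)$ does not make those arcs empty as point sets, and the definition of a fully-separated instance requires $T(\cP)=B(\cP)=\emptyset$ literally. The fix is what the paper does: return $(I,\tilde\cP)$ with $L(\tilde\cP)=L(\cP)\cup T(\cP)$, $R(\tilde\cP)=R(\cP)\cup B(\cP)$, and $T(\tilde\cP)=B(\tilde\cP)=\emptyset$, which still satisfies $L(\cP)\subseteq L(\tilde\cP)$ and $R(\cP)\subseteq R(\tilde\cP)$.
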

\begin{proof}
We first present an algorithm that takes as input a semi-separated instance $(I,\cP)$ and returns a family $\cF$ of fully-separated instances:\\ \\
\textsf{ALG-Semi-Full ($(I,\cP)$)}
\begin{enumerate}
    \item Initialize $\cF:=\emptyset$
    \item Let $X$ be the arc among $T(\cP)$ and $B(\cP)$ having maximum number of endpoints of chords in $H(I)$ and let $x$ be the number of endpoints of chords in $H(I)$ in $X$.
    \item If $x=0$. Define circle partition $\tilde{\cP}$ as $L(\tilde{\cP})=L(\cP)\cup T(\cP)$, $T(\tilde{\cP})=\emptyset$, $R(\tilde{\cP})=R(\cP)\cup B(\cP)$, $B(\tilde{\cP})=\emptyset$. Return $\cF=\{(I,\tilde{\cP})\}$.
    \item If $x=1$, let $\cP':=\cP$:
    \begin{enumerate}
    \item If there are no $L$-$X$ chords in $H(I)$:
    \begin{itemize}
             \item Set $T(\cP')=\emptyset$ and $R(\cP')=R(\cP)\cup T(\cP)$ if $X=T(\cP)$
             \item Set $B(\cP')=\emptyset$ and $R(\cP')=R(\cP)\cup B(\cP)$ if $X=B(\cP)$ 
    \end{itemize}
    Else if there are no $X$-$R$ chords in $H(I)$:
    \begin{itemize}
                \item  Set $T(\cP')=\emptyset$ and $L(\cP')=L(\cP)\cup T(\cP)$ if $X=T(\cP)$ 
                \item Set $B(\cP')=\emptyset$ and $L(\cP')=L(\cP)\cup B(\cP)$ if $X=B(\cP)$ 
    \end{itemize}
    \item Recursively call the algorithm, \textsf{ALG-Semi-Full}, on $(I,\cP')$ and let $\cF'$ be the family of fully-separated instances returned. Set $\cF = \cF'$ and return it.
    \end{enumerate}
    \item If $x>1$:
    \begin{enumerate}
      \item Partition $X$ into arcs $X_L$ and $X_R$, each containing at least $\lfloor x/2 \rfloor$ endpoints of chords in $H(I)$. Let $H_L$ be the set of $L$-$X_R$ chords in $H(I)$ and $H_R$ the set of $X_L$-$R$ chords.
        \item Compute a family $\mathcal{S}$ of at most six subinstances of $I$ using Lemma~\ref{lem:helper-eliminate-chords} on $I,H_L,H_R$
        \item For each $I'\in \mathcal{S}$, let $\cP_{I'}:=\cP$ do:
        \begin{enumerate}
             \item If $H(I')\cap H_L=\emptyset$, i.e. there are no $L$-$X_R$ chords in $H(I')$
             \begin{itemize}
                 \item Set $T(\cP_{I'})=X_L$ and $R(\cP_{I'})=R(\cP)\cup X_R$ if $X=T(\cP)$ 
                 \item Set $B(\cP_{I'})=X_L$ and $R(\cP_{I'})=R(\cP)\cup X_R$ if $X=B(\cP)$
             \end{itemize}
            Else if $H(I')\cap H_R=\emptyset$, i.e. there are no $X_L$-$R$ chords in $H(I')$
            \begin{itemize}
                \item  Set $T(\cP_{I'})=X_R$ and $L(\cP_{I'})=L(\cP)\cup X_L$ if $X=T(\cP)$
                \item Set $B(\cP_{I'})=X_R$ and $L(\cP_{I'})=L(\cP)\cup X_L$ if $X=B(\cP)$
            \end{itemize}
            \item Recursively call the algorithm, \textsf{ALG-Semi-Full}, on $(I',\cP_{I'})$ and let $\cF'$ be the family of fully-separated instances returned. We set $\cF=\cF\cup \cF'$.
        \end{enumerate}
        \item Return $\cF$
    \end{enumerate}
\end{enumerate}

\paragraph{Correctness.} Let $c_{I,\cP}$ be the number of endpoints of chords of $H(I)$ that have an endpoint in $T(\cP)\cup B(\cP)$. We prove the lemma by induction on $c_{I,\cP}$. 
\\ 

\textit{Base case ($c_{I,\cP}=0$).} In this case no chords in $H(I)$ have any endpoint in $T(\cP)\cup B(\cP)$. So we can set $T$ and $B$ to $\emptyset$ and grow $L$ and $R$, this is what we do in step~3. 
Recall $x$ defined in step~2 of the algorithm. Since $c_{I,\cP}=0$, by definition of $x$, we have $x=0$. Clearly the family $\mathcal{F}=\{(I,\tilde{\cP})\}$ returned in step~3 satisfies all the properties of the lemma.
\\

We now prove the correctness of the algorithm for arbitrary $c_{I,\cP}>0$ assuming, by induction, the correctness of the algorithm on all inputs $(\hat{I},\hat{\cP})$ with $c_{\hat{I},\hat{\cP}}<c_{I,\cP}$. For ease, let $c$ denote $c_{I,\cP}$. 
We first show a bound on $c_{I,\cP'}$ and 
$c_{I',\cP_{I'}}$ for each $I'\in \mathcal{S}$. Recall that $\cP'$ is the circle partition constructed from $\cP$ in step~4 for the case when $x=1$.
Also recall that $\cP_{I'}$ is the circle partition constructed from $\cP$ in the iteration of step~5(c) that processes $I'$. 
\begin{observation}
$c_{I,\cP'}\leq c-1$ and for each $I'\in \mathcal{S}$, $c_{I',\cP_{I'}}\leq \lceil 3c/4 \rceil$ and $c_{I',\cP_{I'}}\leq c-1$.
\end{observation}
\begin{proof}
    $\cP'$ is constructed in step~4 when $x=1$. Here the arc $X$ has exactly one endpoint of chords in $H(I)$. Recall that $X$ is either $T(\cP)$ or $B(\cP)$. When we construct $\cP'$ from $\cP$, if $X=T(\cP)$ we set $T(\cP')=\emptyset$ and $B(\cP')=B(\cP)$ and if $X=B(\cP)$ we set $B(\cP')=\emptyset$ and $T(\cP')=T(\cP)$. Thus by definition of $c_{I,\cP'}$ and $c$, it follows that $c_{I,\cP'}\leq c-1$.

    Next we observe that each $I'\in \cS$ is a subinstance of $I$ since $\cS$ is a family constructed using Lemma~\ref{lem:helper-eliminate-chords}. 
    Then recall that $x$ is the number of endpoints of chords in $H(I)$ in $X$ and $X$ is the arc among $T(\cP)$ and $B(\cP)$ having maximum number of endpoints of chords in $H(I)$. So we have $c\leq 2x$ by definition of $c$ and $x$. Let $I'\in \cS$, when construct $\cP_{I'}$, we ensure that either arc $X_L$ or arc $X_R$, which each have at least $\lfloor x/2 \rfloor \geq 1$ endpoints of chords in $H(I)$, is removed from $X$ in $\cP$. Further if $X$ is $T(\cP)$, only $T$ is modified in $\cP'$ and $B$ remains the same. Similarly if $X$ is $B(\cP)$, only $B$ is modified in $\cP'$ and $T$ remains the same.
    This gives us $c_{I',\cP_{I'}}\leq c-1$ and $$c_{I',\cP_{I'}}\leq c - \lfloor x/2 \rfloor \leq c - \lfloor c/4 \rfloor \leq \lceil 3c/4 \rceil$$
\end{proof}

\textit{Case $x=1$.} We now prove correctness for the case when $x=1$. Recall our basic assumption that no two chords share two endpoints. When $x=1$, there either exist an $L$-$X$ chords in $H(I)$ or a $X$-$R$ chord in $H(I)$ but not both. In the former case, we can grow the arc $R(\cP)$ to include $X$ to obtain $\cP'$. In the latter case, we can grow the arc $L(\cP)$ to include $X$ to obtain $\cP'$. In the algorithm each case of $X=T(\cP)$ and $X=B(\cP)$ is explicitly written in step~4(a).
Suppose now that $X=T(\cP)$ and there are no $L$-$X$ chords, we show that $(I,\cP')$ is a semi-separated instance. Here there are no $L(\cP')$-$R(\cP')$ chords because there are no $L(\cP)$-$R(\cP)$ chords and no $L(\cP)$-$T(\cP)$ chords. This is because in this case we set $R(\cP')=R(\cP)\cup T(\cP)$ and $T(\cP')=\emptyset$. A similar argument can be made for the remaining cases.
Combining this with the fact that $c_{I,\cP'}<c-1$, by induction, the family returned by calling the algorithm recursively on $(I,\cP')$ has all the desired properties with respect to $(I,\cP')$. Using this along with how $\cP'$ is defined, it is easy to observe that the family also satisfies all the properties with respect to $(I,\cP)$ that we want for correctness.
\\ 

\textit{Case $x>1$.} We now finally prove that the algorithm is correct even in the case $x>1$.
We first show that each tuple $(I',\cP_{I'})$ is a semi-separated instance. For this, first observe that each $I'\in \cS$ is a subinstance of $I$ by Lemma~\ref{lem:helper-eliminate-chords}. The lemma also guarantees that for each $I'\in \cS$, either $H(I')\cap H_L=\emptyset$ or $H(I')\cap H_R=\emptyset$, i.e., there are no $L$-$X_{R}$ chords or there are no $X_L$-$R$ chords in $H(I')$. So the algorithm must execute step~5(c) and construct a circle partition $\cP_{I'}$. Furthermore, in either case, our construction of $\cP_{I'}$ ensures that there are no $L(\cP_{I'})$-$R(\cP_{I'})$ chords in $H(I')$. By arguments similar to that discussed in case $x=1$, we can conclude that $(I',\cP_{I'})$ is a semi-separated instance.
%

We are left to show that the family $\cF$ has the desired properties. To this end, observe that since $c_{I',\cP_{I'}}<c$, by our inductive assumption, the family $\cF_{I'}$ obtained by applying the algorithm recursively on $(I',\cP_{I'})$ is correct and has all the desired properties with respect to $(I',\cP_{I'})$. Since $\cF=\cup_{I'\in \cS} \cF_{I'}$, each instance in $\cF$ is a fully separated instance. \\[5pt]
\noindent For property~1, recall that Lemma~\ref{lem:helper-eliminate-chords} ensures $I$ is a \yes\ instance if and only if there exists a subinstance $I'\in \cS$ that is a \yes\ instance. By induction, we are guaranteed that $I'$ is a \yes\ instance if and only if there exists a fully-separated instance $(I'',\cP'')\in \cF_{I'}$ such that $I''$ is a \yes\ instance. This combined with how $\cF$ is constructed, implies property~1 --- that $I$ is a \yes\ instance if and only if there exists a fully separated instance $(I'',\cP'')\in \cF$ that is a \yes\ instance.\\[5pt]
\noindent For property 2, note that the family $\cS$ is constructed using  Lemma~\ref{lem:helper-eliminate-chords} in step~5(b) and the lemma guarantees that each $I'\in \cS$ is a subinstance of $I$. Next recall that the family $\cF_{I'}$ is constructed by a recursive call on $(I',P_{I'})$. By construction of $\cP'$ in step~5(c), $L(\cP)\subseteq L(\cP_{I'})$ and $R(\cP)\subseteq R(\cP_{I'})$.
Furthermore, for each $(I'',\cP'')\in \cF_{I'}$, $I'\in \cS$, we are guaranteed by induction that $I''$ is a subinstance of $I'$ and $L(\cP_{I'})\subseteq L(\cP'')$ and $R(\cP_{I'})\subseteq R(\cP'')$. 
 Together, these imply property 2 --- that, $I''$ is a subinstance of $I$ and that $L(\cP)\subseteq L(\cP'')$ and $R(\cP)\subseteq R(\cP'')$. This concludes the proof of correctness of the algorithm.

\paragraph{Running time \& size.} We first observe that all steps other than the recursive calls take $n^{O(1)}$ time. Further $|\cS|\leq 6$ by Lemma~\ref{lem:helper-eliminate-chords}.
Let $T(n,c)$ denote the running time of the algorithm.
When $c<4$, the algorithm runs in $n^{O(1)}$ time. This is because for $c<4$, we have $T(n,c)\leq 6\cdot T(n,c-1)+n^{O(1)}$. 
When $c\geq 4$, we then have $T(n,c)\leq 6\cdot T(n,\lceil 3c/4 \rceil) + n^{O(1)}$.
This evaluates to $T(n,c)\leq n^{O(1)}\cdot 6^{O(\log c)}$.
We know that $c\leq 2n$ since $c$ is upper bounded by the total number of endpoints of chords in $H(I)$ which is at most $2n$. This yields $T(n,c)\leq n^{O(1)}$. A very similar argument can also be made to bound the size of $\cF$ by $n^{O(1)}$.

\end{proof}

With Lemma~\ref{lem:semi-separated} and Lemma~\ref{lem:semi-fully-sep} in hand, we are now ready to formally prove Lemma~\ref{lem:family-fully-seperated-instances}, and thus construct our desired family of fully-separated instances.
\begin{proof}[Proof of Lemma~\ref{lem:family-fully-seperated-instances}] 
We first present our algorithm that takes as input an instance $I=(G,(H,\alpha),S)$ of \textsc{Circle Graph List 3-Coloring} and returns a family $\cF$ of fully-separated instances below:
\begin{enumerate}
    \item Compute a family $\mathcal{S}$ of at most six semi-separated instances using Lemma~\ref{lem:semi-separated} on $I$.
    \item Initialize $\cF:=\emptyset$.
    \item For each semi-separated instance $(I',\cP')\in \mathcal{S}$:
    \begin{enumerate}
        \item Compute a family $\cF_{(I',\cP')}$ of fully-separated instances using Lemma~\ref{lem:semi-fully-sep} on $(I',\cP')$.
        \item Set $\cF=\cF\cup \cF_{(I',\cP')}$
    \end{enumerate}
    \item Return $\cF$
\end{enumerate}

\textit{Fully-separated, size, runtime.} Firstly it is clear that $\cF$ is a family of fully-separated instances since it is a union of families of fully-separated instances obtained by using Lemma~\ref{lem:semi-fully-sep} on various semi-separated instances. Further $|\cS|\leq 6$ and size of each $\cF_{(I',\cP')}$, $(I',\cP')\in \cS$ is $n^{O(1)}$. Thus $|\cF|$ is at most $n^{O(1)}$. Since the algorithms of Lemma~\ref{lem:semi-separated} and Lemma~\ref{lem:semi-fully-sep} run in $n^{O(1)}$ time and $|\cS|\leq 6$, the algorithm clearly runs in time $n^{O(1)}$.

\textit{Property 1.} By Lemma~\ref{lem:semi-separated}, $I$ is a \yes\ instance if and only if there exists a semi-separated instance $(I',\cP')\in \mathcal{S}$ such that $I'$ is a \yes\ instance. Further by Lemma~\ref{lem:semi-fully-sep}, 
for each semi-separated instance $(I',\cP')\in \cS$, $I'$ is a \yes\ instance if and only if there exists a fully-separated instance $(I'',\cP'')\in \mathcal{F}_{(I',\cP')}$ such that $I''$ is a \yes\ instance. Combining this with how $\mathcal{F}$ is constructed we get that $I$ is a \yes\ instance if and only if there exists a fully-separated instance $(I'',\cP'')\in \mathcal{F}$ such that $I''$ is a \yes\ instance. 

\textit{Property 2.} We know from Lemma~\ref{lem:semi-separated} that for any $(I',\cP')\in \cS$, the arcs $L(\cP')$ and $R(\cP')$ contain at least $\lfloor n/2 \rfloor$ endpoints of chords in $H(I)$. Now by Lemma~\ref{lem:semi-fully-sep}, for any $(I'',\cP'')\in \cF_{(I',\cP')}$, $(I',\cP')\in \cS$, we have $L(\cP')\subseteq L(\cP'')$ and $R(\cP')\subseteq R(\cP'')$. Thus for any $(I'',\cP'')\in \cF$, the arcs $L(\cP'')$ and $R(\cP'')$ contain at least $\lfloor n/2 \rfloor$ endpoints of chords in $H(I)$. Finally it also follows from Lemma~\ref{lem:semi-separated} and Lemma~\ref{lem:semi-fully-sep} that for each $(I'',\cP'')\in \cF$, $I''$ is a subinstance of $I$.
\end{proof}

\section{Concluding Remarks}
We gave an algorithm running in time $n^{O(\log n)}$ for $3$-coloring circle graphs.
The provided algorithm can easily be made constructive, i.e., can either output a $3$-coloring of a given circle graph or correctly identify that no such coloring exists.
Also, for clarity of exposition, we have presented the algorithm in a form that uses quasi-polynomial space. We note that with minor modifications, Lemma~\ref{lem:semi-separated} and Lemma~\ref{lem:semi-fully-sep}, and consequently Lemma~\ref{lem:family-fully-seperated-instances}, can be adapted to return the elements iteratively rather than all at once. Hence, the algorithm can be implemented using only polynomial space.

Apart from the classical applications of graph coloring, the result also allows us to compute $3$-page book embeddings on general graphs equipped with a fixed vertex ordering. 
Our new algorithm is simpler than several recent algorithmic results targeting the computation of book embeddings~\cite{BekosLGGMR20,GanianMOPR24}, but it is non-trivial. Crucially, it is self-contained, reproducible, and provides a strong indication that the problem is not \NP-hard. 

Whether \textsc{Circle Graph (List) $3$-Coloring} admits a polynomial-time algorithm or not remains as an important open question, and we hope that our results can help pave the way towards finally settling it. This could be resolved either via a polynomial algorithm, or by establishing a quasi-polynomial lower bound akin to those that have been developed for some other problems in the literature~\cite{AaronsonIM14,BravermanKW15,BravermanKRW17}.

\subsection*{Acknowledgments.}
Robert Ganian acknowledges support from the Austrian Science Fund (FWF) [Projects 10.55776/Y1329 and 10.55776/COE12] and the WWTF Vienna Science and Technology Fund [Project 10.47379/ICT22029]. 
Vaishali Surianarayanan acknowledges support from the UCSC Chancellor’s Postdoctoral Fellowship. 
Ajaykrishnan E S and Daniel Lokshtanov acknowledge support from NSF Grant~No.~2505099, \emph{Collaborative Research: AF Medium: Structure and Quasi-Polynomial Time Algorithms}. 
The authors gratefully acknowledge that the problem addressed in this paper was first brought to their attention during a talk at the Workshop on Graph Classes, Optimization and Width Parameters (GROW 2024)~\cite{GROW2024}, and that some ideas were developed during the Dagstuhl Seminar New Tools in Parameterized Complexity: Paths, Cuts, and Decomposition~\cite{Dagstuhl24411}.

\bibliographystyle{plainurl}
\bibliography{ref}
\end{document}